\newtheorem{thm}{Theorem}[section]
\newtheorem{prop}{Proposition}[section]
\newtheorem{lem}{Lemma}[section]
\newtheorem{remark}{Remark}[section]
\newtheorem {asp}{Assumption}[section]
\theoremstyle{definition}
\theoremstyle{remark}
\numberwithin{equation}{section}
\numberwithin{equation}{section}
\newcommand{\bed}{\begin{displaymath}}
\newcommand{\eed}{\end{displaymath}}
\newcommand{\bea}{\bed\begin{array}{rl}}
\newcommand{\eea}{\end{array}\eed}
\newcommand{\barray}{\begin{array}{ll}}
\newcommand{\earray}{\end{array}}
\newcommand{\diag}{{\rm diag}}
\newcommand{\1}{\boldsymbol{1}}
\DeclareMathOperator*{\argmax}{arg\,max}
\def\a.s{\text{\;a.s.\;}}
\newcommand{\beq}[1]{\begin{equation} \label{#1}}
\newcommand{\eeq}{\end{equation}}
\title[]{Maximum Likelihood Estimation of Diffusions by Continuous Time Markov Chain}
\author[J.L. Kirkby]{J. Lars Kirkby }
\address{School of Industrial and Systems Engineering \\
Georgia Institute of Technology\\
Atlanta, GA 30318, USA}
\email{jkirkby3@gatech.edu}
\author[D.H. Nguyen]{Dang H. Nguyen }
\address{Department of Mathematics \\
University of Alabama\\
345 Gordon Palmer Hall\\
Box 870350 \\
Tuscaloosa, AL 35487-0350 \\
United States}
\email{dangnh.maths@gmail.com}
\author[D. Nguyen]{Duy Nguyen }
\address{Department of Mathematics\\
Marist College\\
3399 North Road\\
Poughkeepsie NY 12601\\
United States
}
\email{nducduy@gmail.com}
\author[N. Nguyen]{Nhu N. Nguyen }
\thanks{Nhu N. Nguyen was in part supported by
	the National Science Foundation
	under grant DMS-1710827.}
\address{Department of Mathematics \\
University of Connecticut\\
341 Mansfield\\
 Storrs, CT 06269\\
United States}
\email{nguyen.nhu@uconn.edu}
\keywords{MLE, diffusion, SDE, Maximum Likelihood estimation, CTMC, Continuous Time Markov Chain, stochastic differential equation, estimation}
\subjclass[2010]{34D20, 60H10, 92D25, 93D05, 93D20.}
\begin{document}
\maketitle

\begin{abstract}
 In this paper we present a novel method for estimating the parameters of a parametric diffusion processes. Our approach is based on a closed-form Maximum Likelihood estimator for an approximating Continuous Time Markov Chain (CTMC) of the diffusion process. Unlike typical time discretization approaches, such as psuedo-likelihood approximations with Shoji-Ozaki or Kessler's method, the CTMC approximation introduces no time-discretization error during parameter estimation, and is thus well-suited for typical econometric situations with infrequently sampled data. Due to the structure of the CTMC, we are able to obtain closed-form approximations for the sample likelihood which hold for general univariate diffusions. 
  Comparisons of the state-discretization approach with approximate MLE (time-discretization) and Exact MLE (when applicable) demonstrate favorable performance of the CMTC estimator.  Simulated examples are provided in addition to real data experiments with FX rates and constant maturity interest rates.
\end{abstract}
\tableofcontents
\newpage

\section{Introduction}
Diffusion processes are used extensively in 
financial engineering to model
the dynamics of stock prices, interest rates, and foreign exchange rates, among numerous other applications.
Notable examples include the geometric Brownian motion (GBM) which is used 
in the Black-Scholes framework to model the stock prices \cite{black1973pricing} or the
 Cox–Ingersoll–Ross (CIR) diffusion, which was introduced by \cite{cox2005theory} to describes the evolution of interest rates, and is also widely used as a model for stochastic volatility.
 
 The drift  and  diffusion terms of the process contain
 a number of unknown parameters which require estimation before
 the model can be used. Typically, a discretized finite sample path of the process
 is collected and used to estimate the unknown parameters. For a continuous time diffusion,
 its transition density function plays a crucial role in understanding the dynamics
 of the process. Most importantly, it can be used to estimate the model's unknown parameters by means of maximum likelihood. Unfortunately, the transition
 density function is unknown for most diffusion models, and it becomes virtually impossible to determine the exact maximum likelihood estimates for the unknown parameters.
To overcome the unavailability of the transition density, approximation methods are  usually employed.
 Several econometric approaches
 have been proposed to estimate the unknown parameters. 
 These econometric methods can be categorized as the simulation approach (\cite{gourieroux1993indirect,gallant1996moments}), (generalized) method of moments (\cite{hansen1993back,kessler1999estimating}),
 (non)parametric density matching (\cite{ait1995nonparametric,ait1996testing}, and Bayesian methodologies (\cite{eraker2001mcmc,jones1997bayesian}). \cite{ait2002maximum} makes a fruitful breakthrough
 in using Hermite polynomials to orthogonally approximate
 the transition density of a univariate time-homogeneous diffusion.
 This idea was later extended to time-inhomogenous diffusion in \cite{egorov2003maximum},
 multivariate time-homogenous (\cite{ait2008closed}) and time-inhomogeneous diffusions (\cite{choi2013closed}), stochastic volatility (\cite{ai2007maximum}) and affine multi-factor models (\cite{ait2010estimating}).

Note that to apply the method of  \cite{ait2002maximum,ait2008closed} for univariate diffusions, one must
be able to transform the given diffusion to a unit diffusion process where the volatility is the identity.
The method is inapplicable if the diffusion is not reducible in this fashion, although the reducible condition was later relaxed in the work of \cite{choi2015explicit} and recently of \cite{yang2019new}.
For further extensions, please see the recent work of \cite{li2013maximum}, \cite{li2019pricing}, and references therein.

Diffusion processes evolve continuously both in
space and time. As a result,
their analytical tractability
is usually limited except for some very special cases, and efficient and accurate numerical approximation
is often employed for statistical inference. 
In general, there are two possible directions for
approximating a diffusion process: 1) time discretization which includes the Euler discretization as well
as higher order time-stepping schemes (see \cite{higham2001algorithmic,jacod2011discretization} for a comprehensive account
of existing methods), and 2) spatial discretization, where one can discretize the state space into a finite discrete grid of spatial points, while preserving the continuous time dimension of the diffusion
process. We take the later approach and approximate the evolution of the diffusion process through a
continuous-time Markov chain (CTMC).
Research in this direction was initiated in  \cite{mijatovic2013continuously}
where the authors
approximated
the value of barrier options
under a very general time-homogenous and time-inhomogenous one-dimensional Markov process.
This idea was later extended to price Asian options (\cite{cai2015general,kirkby2020efficient}), and realized variance derivatives (\cite{cui2017general}) under stochastic volatility dynamics.
A rigorous error analysis for the CTMC approximation and the optimal discretization grid design was considered in  \cite{li2018error,zhang2019analysis}.  Simulation of two-dimensional diffusions was proposed in \cite{cui2021efficient}.

In this paper, by approximating a univariate diffusion by a continuous time Markov chain (CTMC), we present a novel method for estimating the parameters of a parametric diffusion process. To the author's best knowledge, this is the first time a CTMC approximation has been used to conduct statistical inference. Our approach is based on a closed-form Maximum Likelihood estimator for the approximating CTMC, and it requires no ``reducibility conditions'' or specific knowledge of the process (other than the parametric form of the model family) to be applicable.  The breadth of processes that are well approximated is large, including those encountered in finance, economics, and the physical sciences.  The present work focuses on the important case of univariate diffusions, with extensions to the multivariate case left for future analysis.  Because the approach is immune to time-discretization bias, it can be safely applied in situations with infrequent time-sampling, such as weekly, monthly, quarterly, or yearly sampled time-series, which are all common in econometric data.  Time-discretization approaches, such as Euler's method or Shoji-Ozaki, should be applied with caution in such cases, as their validity hinges upon a small time-step.

The rest of this paper is organized as follows: Section \ref{sect:Problem formulation}
introduces the problem of estimating parameters
of one-dimensional diffusions.
In Section \ref{sect:Continuous time Markov chain approximation},
we consider approximating
the one-dimensional diffusion
by a finite state CTMC by  explicitly
constructing the transitional matrix which governs
the dynamics of the CMTC.
Section \ref{sect:MLECTMC} is concerned 
with the Maximum Likelihood Estimation (MLE)
of parameters of the diffusion. 
We provide a rigorous convergence analysis as well as 
a quasi-Newton method which is used to numerically approximate
the unknown parameters. In Section \ref{section:Numerical Examples},
we provide numerous
numerical examples
to demonstrate the effectiveness
of the proposed method, including a real data example using 10-Year Constant Maturity interest rates, and another using foreign exchange data.
We also compare the obtained results with various
numerical approaches in the literature. Comparisons with existing approaches, including Exact MLE when applicable, demonstrate that the method is quite reliable.
Section \ref{section:Conclusion} concludes the paper.

\section{Problem formulation}\label{sect:Problem formulation}
Consider the stochastic diffusion process 
\begin{equation}\label{eq:Diffuse}
dS_t=\mu(S_t,\theta)dt+\sigma(S_t,\theta)dW_t, \quad t\geq 0,
\end{equation}
where $(W_t)_{t\geq 0}$ is the standard Brownian motion, $\mu(S_t,\theta): (\mathbb R \times \mathbb R^d)\rightarrow \mathbb R$ and $\sigma(S_t,\theta): (\mathbb R \times \mathbb R^d)\rightarrow \mathbb R_+$
are the drift and diffusion term, respectively. The unknown parameter vector $\theta=(\theta_1,\theta_2,\ldots,\theta_d)$
belongs to a compact set $\Theta\subset \mathbb R^d$. We will assume that the drift and diffusion functions satisfy the local Lipschitz condition with linear growth, which guarantees a weakly unique solution to \eqref{eq:Diffuse}.
For example, recall the Geometric Brownian Motion (GBM)
\begin{equation}
dS_t=\mu S_t dt+\sigma S_tdW_t,
\end{equation}
where in this case $\theta=(\mu,\sigma)$ is unknown. Given a time-step $\Delta>0$, let $p(\Delta,s^\prime,s)\equiv p(\Delta,s^\prime,s;\theta)$ denote the transition density function of $S_t$; that is
\begin{equation}
\mathbb P(S_{t+\Delta}\in ds^\prime|S_t=s)=p(\Delta, s^\prime,s)ds^\prime.
\end{equation}
To estimate the unknown parameter $\theta$, we assume that
a discrete sample of $S_t$ is observed: $S_1,S_2,\ldots,S_N$, with observations taken at a uniform frequency $\Delta$. By the Markovian property of $S_t$, the sample log-likelihood function is given by
\begin{equation}
L_N(\theta, \Delta):=\sum_{n=1}^{N-1}\ln p(\Delta,S_{n+1},S_{n}).
\end{equation}
The maximum likelihood estimator (MLE) of $\theta$ is defined to be the maximizer of the following constrained optimization problem:
\begin{equation}
\widehat{\theta}_N:=\argmax_{\theta\in\Theta} L_N(\theta, \Delta),
\end{equation}
and we will refer to this as the \emph{Exact MLE}, as it utilizes the exact transition density.
Similar to \cite{li2013maximum},  we make the following assumption throughout this paper.
\begin{asp}
The transition density $p(\Delta,x,y)$ is
continuous in $\theta\in\Theta$ and the log-likelihood function
$ L_N(\theta,\Delta)$ admits a unique
maximizer in the parameter set $\Theta$.
\end{asp}
It is well known that a closed-form expression for
$p(\Delta, s^{\prime},s)$ is unavailable for most diffusion processes. Thus, in most cases it can be difficult
to find $\widehat{\theta}_N$ explicitly, and approximations are typically used \cite{hurn2007seeing}. A notable example is the Hermite expansion approach of \cite{ait2002maximum, ait2008closed}, see also \cite{choi2015explicit}.  Other standard examples include the method of Kessler \cite{kessler1997estimation} as well as that of Shoji-Ozaki \cite{shoji1998statistical}. For an excellent overview of estimation methods for SDE, including the above-mentioned approaches, see \cite{iacus2009simulation}.

In the next section, we propose a
continuous time Markov chain approximation to the diffusion $S_t$,
and using the CTMC process, we can approximate $\widehat{\theta}_N$ in a straightforward manner.  A key advantage of this methodology is that it can be applied with only a knowledge of the parametric form of the diffusion. It does not rely on our ability to derive closed-form expressions for specific models, or any other tricks to make it applicable in practice.  As we discussed below, it even has computational advantages over exact/approximate MLE for large samples.

%

\section{Continuous time Markov chain approximation}
\label{sect:Continuous time Markov chain approximation}
The essence of this work is to define a tractable approximation to the diffusion in \eqref{eq:Diffuse} for which the likelihood is available in closed form. We accomplish this via a general CTMC approximation of the diffusion. 
 
\subsection{The CTMC}\label{sect:Binning}
Given a parametric diffusion family characterized by \eqref{eq:Diffuse}, we will construct
a continuous-time Markov chain $\{S^{m}_t\}_{t\geq 0}$, taking values in some discrete state-space $\mathbb{S}_{m}:=\{s_1,s_2,\ldots,s_{m}\}$, whose dynamics well resemble those of $S_t$. 
For the Markov chain $S_t^{m}$,  its transitional dynamics are described
by the \textit{rate matrix} $\mathbf Q=\mathbf Q(\theta)=[q_{ij}(\theta)]_{m\times m}\in\mathbb R^{m\times m}$, whose elements $q_{ij}=q_{ij}(\theta)$ satisfy the $q$-property: 
(i) $q_{ii}\leq 0$, $q_{ij}\geq 0$ for $i\neq j$,
and (ii) $\sum_{j}q_{ij}=0, \forall i=1,2,\ldots,m$. In terms of $q_{ij}$'s, the transitional probability of the CTMC $S_t^{m}$ is given by: 
\begin{equation}
\label{transitionalProp}
\mathbb{P}(S^{m}_{t+\Delta}=s_j| S^{m}_{t}=s_i, S_{t^{\prime}}^{m},0\leq t^{\prime}\leq t)=
\delta_{ij}+q_{ij}\Delta+o(\Delta^2),
\end{equation}
where in the above expression $\delta_{ij}$ denotes the Kronecker delta.  In particular, the transitional matrix is represented  in the form of a matrix exponential:
\begin{equation}\label{eq:StateTrans}
\mathbf T(\Delta) = \exp(\mathbf Q(\theta) \Delta) = \sum_{k=0}^\infty (\mathbf Q(\theta) \Delta)^k/(k!), \quad \Delta> 0.
\end{equation}

\begin{figure}[h!t!b!]
\centering     
\subfigure{\includegraphics[width=.51\textwidth]{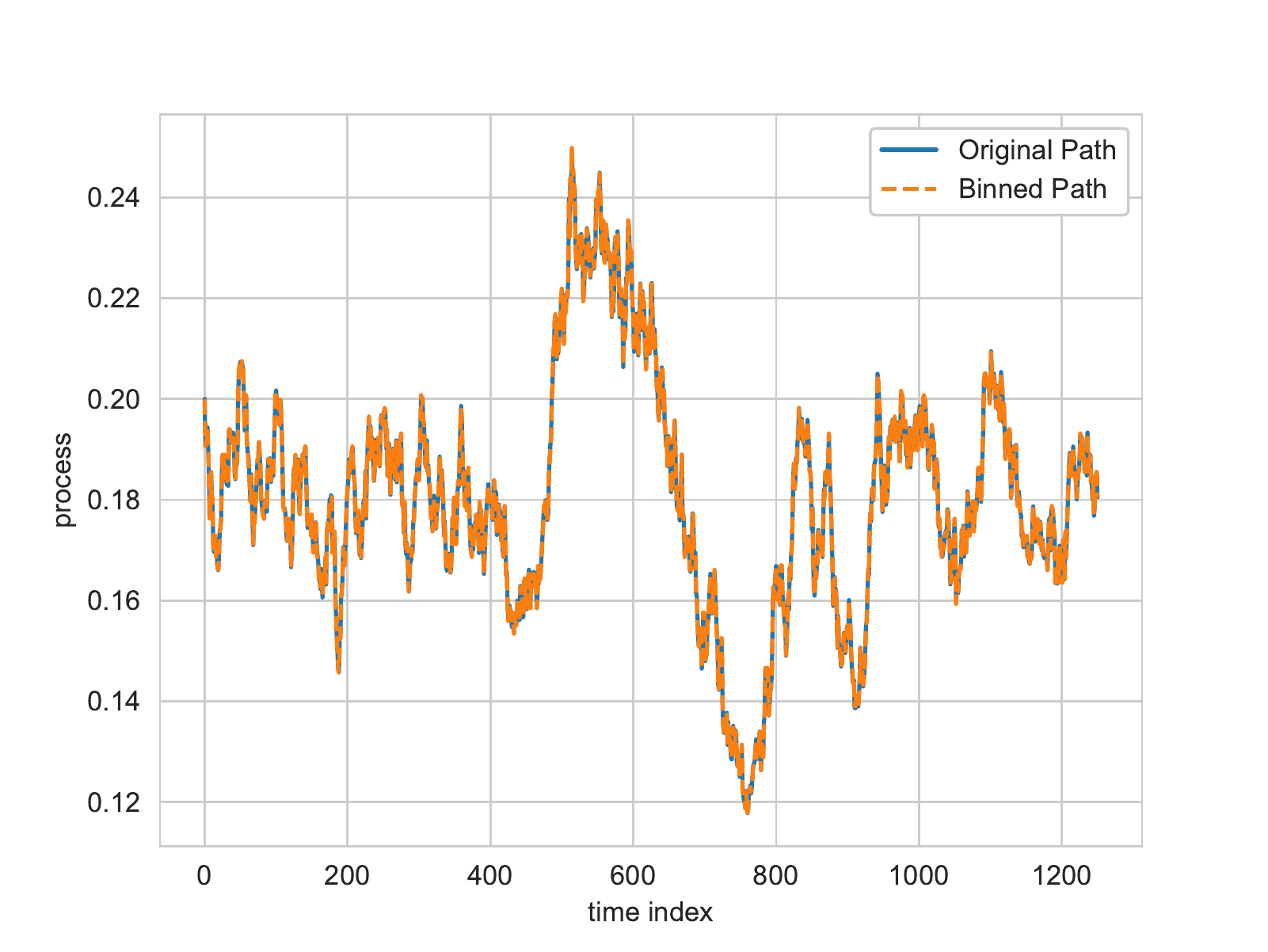}}\hspace{-1.8em}
\subfigure{\includegraphics[width=.51\textwidth]{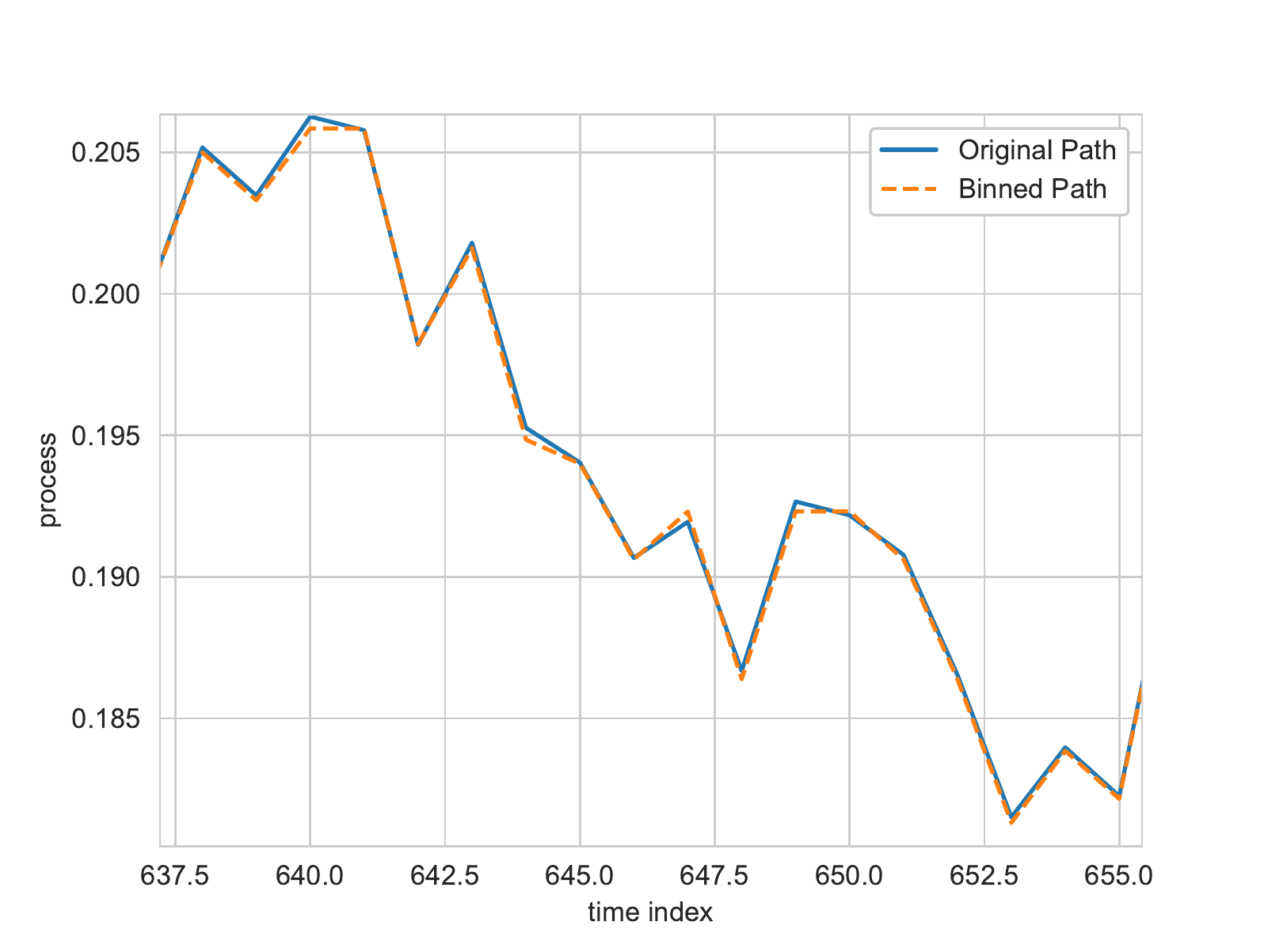}}
\caption{\small{Simulated OU path and binned sample (Left), with zoom-in (Right). Params: $S_0=0.2$, $\kappa = 2$, $m=0.2$, $\sigma=0.15$.
}}\label{fig:OUPath}
\end{figure}
 
Suppose that we are given a continuous sample $\bm S = (S_1, S_2, \ldots, S_N)$. We first construct a state-space $\bm S_m$, based on the observed sample. For simplicity, we consider a uniformly spaced state-space which covers the full sample range.\footnote{We could also apply a non-uniform space, which clusters more points around high density areas of the sample.} We then apply a simple binning procedure to obtain a mapped sample that lives in the state-space of the CTMC. Specifically, we map $s_{j} = \mathcal I(S_{k\Delta})$ where $s_j$ is the nearest point in the CTMC state-space to $S_{k\Delta}$, and we denote this value by $S^m_{k\Delta}$. In this way we observe the \emph{discrete} sample $\bm S^m:=(S^m_1,S^m_2,\ldots
,S^m_N)$, where each $S^m_n\in \mathbb{S}_{m}$.
This procedure is illustrated in Figure \ref{fig:OUPath} for the Ornstein-Uhlenbeck (OU) model, given by $dS_t=\kappa(\mu-S_t)dt+\sigma dW_t$.  Zooming in on the sample path in the right panel, we can see that with sufficiently many states, the continuous sample path is well approximated by the discrete state-space.  


\subsection{Approximating the Diffusion Generator}
In Section \ref{sect:Binning}, we constructed a state-space $\mathbb{S}_{m}$ for the CTMC, and mapped the continuous sample path onto $\mathbb{S}_{m}$. The next step in the process is to define the generator matrix $\bm Q(\theta)$ so that the continuous-time dynamics are well matched by the CTMC. The generator is parameterized by $\theta$ through the parametric family we have chosen as the model. Determination of the actual values of $\theta$, and hence the makeup of $\bm Q(\theta)$, will be accomplished via maximum likelihood estimation,  discussed in Section \ref{sect:MLECTMC}.

We will require a few more concepts.
First, for a bounded Borel function $H$, define
\begin{equation}
\label{P-operator}
P_tH(s):=\mathbb{E}_s[H(S_t)]:=\mathbb{E}[H(S_t)|S_0=s],
\end{equation}
and recall  that $S$ satisfies the Markov property:
\begin{equation}
\label{MarkovProperty}
\mathbb E[H(S_{t+r})|\mathcal{F}_t]=P_rH(S_t).
\end{equation}
From \eqref{MarkovProperty}, the family of operators $(P_t)_{t\geq 0}$ is easily seen to form
a semigroup:
\begin{equation}
\label{semiGroup-Property}
P_{t+r}H=P_t(P_rH),\quad \forall r,t\geq 0,\quad\mbox{and}\quad P_0H=H.
\end{equation}

Let $C_0(\mathcal{S})$ denote the set of continuous functions on the state space $\mathcal{S}$ that vanish at infinity. To guarantee the existence of a \textit{version} of $S$
with c\'{a}dl\'{a}g paths satisfying the (strong) Markov process, we assume the following Feller's properties:
\begin{asp}
$S=\{S_t\}_{t\geq 0}$ is a Feller process on $\mathcal{S}$. That is, for any
$H\in C_0(\mathcal{S})$, the family of operators $(P_t)_{t\geq 0}$ satisfies
\begin{itemize}
\item $P_tH\in C_0(\mathcal{S})$ for any $t\geq 0$;
\item $\lim_{t\to 0}P_tH(s)=H(s)$ for any $s\in\mathcal{S}$.
\end{itemize}
\end{asp}

\noindent The family $(P_t)_{t\geq 0}$ is determined by its infinitesimal generator $\mathcal{L}$, where
\begin{equation}
\label{S-generator-def}
\mathcal{L}H(s):=\displaystyle\lim_{t\to 0^+}\frac{P_tH(s)-H(s)}{t},\quad \forall H\in C_0(\mathcal{S}).
\end{equation}
For the diffusion given in \eqref{eq:Diffuse}, we have 
\begin{equation}
\label{S-generator}
\mathcal{L}H(s)=\frac{1}{2}\sigma^2(s,\theta)\frac{\partial ^2 H}{\partial s^2}+
\mu(s,\theta)\frac{\partial H}{\partial s}. 
\end{equation}

The CTMC approximation is based on approximating the generator $\mathcal{L}$ by $\mathcal{L}^{m}$, defined as follows. For each $i\in \{1,2,\ldots,m-1\}$ define $k_i:=s_{i+1}-s_i$, and let $\mu^{+}$($\mu^{-}$) denote respectively the positive (negative) part of the function $\mu$. A non-uniform finite discretization of  $\mathcal{L}H(x)$ in \eqref{S-generator} is given by:
\begin{align}
\label{discrete-generator}
&\mu(s_i)\left(\displaystyle\frac{-k_i}{k_{i-1}(k_{i-1}+k_i)}H(s_{i-1})
+\displaystyle\frac{k_i-k_{i-1}}{k_ik_{i-1}}H(s_i)
+\displaystyle\frac{k_{i-1}}{k_i(k_{i-1}+k_i)}H(s_{i+1})
 \right)\notag\\
&+\displaystyle\frac{ \sigma^2(s_i)}{2}
\left(\frac{2}{k_{i-1}(k_{i-1}+k_i)}H(s_{i-1})
-\frac{2}{k_{i-1}k_i}H(s_i)
+\frac{2}{k_i(k_{i-1}+k_i)}H(s_{i+1})
 \right)\notag\\
 &=q_{i,i-1}H(s_{i-1})+q_{i,i}H(s_i)+q_{i,i+1}H(s_{i+1})\notag\\
& =:\mathcal{L}^{m}H(s).
\end{align}

\noindent where $q_{i,j}$'s are chosen as in \cite{lo2014improved}, which is  recalled here
\begin{equation}
\label{q-MCA-Heston}
q_{ij}(\theta)=\left\{
\begin{array}{ll}
\displaystyle\frac{\mu^-(s_i,\theta)}{k_{i-1}}+\displaystyle\frac{\sigma^2(s_i,\theta)-(k_{i-1}\mu^-(s_i,\theta)+k_i\mu^+(s_i,\theta))}{k_{i-1}(k_{i-1}+k_i)},& \;\; \text{if}\;\; j=i-1,\\
\displaystyle\frac{\mu^+(s_i,\theta)}{k_{i}}+\displaystyle\frac{\sigma^2(s_i,\theta)-(k_{i-1}\mu^-(s_i,\theta)+k_i\mu^+(s_i,\theta))}{k_{i}(k_{i-1}+k_i)},& \;\; \text{if}\;\; j=i+1,\\
-q_{i,i-1}-q_{i,i+1},&\;\; \text{if}\;\;j = i,\\
0, & \;\; \text{if}\;\; j\neq i-1,i,i+1.\\
\end{array}
\right.
\end{equation}
Here  $\textbf{k}:=\{k_1,k_2,\ldots,k_{m-1}\}$ is assumed to be chosen such that
\[
0<\displaystyle\max_{1\leq i\leq m-1}\{k_i\}\leq\displaystyle \min_{\theta\in \Theta}\min_{1\leq i\leq m}\left\{\frac{\sigma^2(s_i,\theta)}{|\mu(s_i,\theta)|}\right\}.
\]
\begin{remark}
With this choice of $k_i$'s, $\mathbf Q(\theta)=[q_{ij}]_{m\times m}$ is a tridiagonal matrix. Moreover, we  have
\begin{align}\label{local-consistency}%
\sigma^2(s_i)\geq  \displaystyle\max_{1\leq i\leq m-1}\{k_i\}\cdot|\mu(s_i,\theta)|&\geq  \displaystyle\max_{1\leq i\leq n-1}\{k_i\}\cdot(\mu^+(s_i,\theta)+\mu^-(s_i,\theta))\nonumber\\
&\geq k_{i-1}\mu^-(s_i,\theta)+k_i\mu^+(s_i,\theta).
\end{align}
As a result, the $q$-property is satisfied:  $q_{ij}\geq 0,\forall 1\leq i\neq j\leq m$, and $\sum_{j=1}^{m}q_{ij}=0,  i=1,\ldots,m.$  
\end{remark}

\begin{remark} (Boundary Conditions)
For the diffusion $S_t$ with state space $\mathbf{S}=(s_l,s_r)$ ($-\infty\leq s_l<s_r\leq\infty$), we assume that the two endpoints are \textit{inaccessible} if $\mathbf{S}$ is an infinite interval. Otherwise, the boundary points can be classified as \textit{exit, entrance}, or \textit{natural}. {Note that a \textit{natural} endpoint does not belong to the state space since it can not be reached in finite time. See discussions on page 15 of \cite{borodin2012handbook}.} A \textit{non-singular} boundary point is both entrance and exit, and it can be further classified into \textit{reflecting} or \textit{absorbing}. For further details, please refer to page 16-17 of \cite{borodin2012handbook}. 
When constructing the continuous-time Markov chain approximation ${S}_t^{m}$, we assume that the boundary points $s_1,s_{m}$ are reflecting or absorbing.
\end{remark}
Under some appropriate conditions, it can be shown that  $S_t^{m}$ converges weakly to $S_t$ as $m\to\infty$.
More specifically, there is the following result. 
\begin{thm}(Weak convergence \cite{mijatovic2013continuously})
Let $S$ be a Feller process whose infinitesimal generator
$\mathcal{L}$ does not vanish at zero and infinity.
Let $S_t^{m}$ be the continuous time Markov chain with the generator given in \eqref{discrete-generator}.
Assume that $\max_{s\in\mathbb{S}_{m}}|\mathcal LH(s)-\mathcal{L}^{m}H(s)|\to 0$ as $m\to\infty$ for all functions $H$ in the core of $\mathcal L$  and 
$\lim_{s\to 0^+}\mathcal L H(s)=0$, then $S_t^{m}$ converges weakly to $S_t$ as $m\to\infty$.
That is, $\mathbb E[H(S_T^{m})|S_0]\to\mathbb E[H(S_T)|S_0]$ for all bounded continuous functions $H$. 
\end{thm}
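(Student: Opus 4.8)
The plan is to deduce the claimed marginal convergence from convergence of the associated Feller semigroups, which in turn follows from the generator-convergence hypothesis via the Trotter--Kurtz semigroup approximation theorem (in the form that handles sequences of processes on varying state spaces; cf.\ Ethier--Kurtz, \emph{Markov Processes: Characterization and Convergence}, Ch.~1 for semigroup convergence and Ch.~4 for the passage to weak convergence of the processes). Since the objects being compared live on different spaces --- the limit $S$ on the interval $\mathcal{S}$ and each chain $S^m$ on the finite grid $\mathbb{S}_m$ --- the first step is to set up the functional-analytic bridge. I would take $L=C_0(\mathcal{S})$ with generator $\mathcal{L}$, and $L_m = C(\mathbb{S}_m)\cong\mathbb{R}^m$ equipped with the sup-norm, together with the restriction operators $\pi_m\colon L\to L_m$, $(\pi_m H)(s_i):=H(s_i)$; the finite-state generator $\mathcal{L}^m$ of \eqref{discrete-generator} is then identified with the rate matrix $\mathbf{Q}(\theta)$ of \eqref{q-MCA-Heston}.

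Second, I would record that both sides generate honest contraction semigroups. By the Feller-property assumption, $(P_t)_{t\geq 0}$ is a Feller semigroup with closed generator $\mathcal{L}$ possessing a core $D$ on which $\mathcal{L}$ acts as the second-order operator \eqref{S-generator}. On the finite side, the $q$-property verified in the Remark via \eqref{local-consistency} guarantees that $\mathbf{Q}(\theta)$ is a genuine conservative rate matrix, so $T^m(t):=\exp(\mathbf{Q}(\theta)t)$ is a Markov (hence contraction) semigroup on $L_m$ with bounded generator $\mathcal{L}^m$.

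Third --- the heart of the argument --- I would verify the hypotheses of Trotter--Kurtz. The norm condition $\|\pi_m H\|_{L_m}\to\|H\|_L$ is immediate for $H\in C_0(\mathcal{S})$ since the grids exhaust $\mathcal{S}$ as $m\to\infty$. The essential generator-convergence condition asks that for every $H$ in the core $D$ one can produce approximants $H_m$ with $H_m\to H$ and $\mathcal{L}^m H_m\to\mathcal{L}H$; taking $H_m:=\pi_m H$, the finite-difference structure of \eqref{discrete-generator} makes $\mathcal{L}^m\pi_m H$ equal to the discretized operator evaluated on the grid, so the requirement reduces precisely to
\[
\|\mathcal{L}^m\pi_m H-\pi_m\mathcal{L}H\|_{L_m}=\max_{s\in\mathbb{S}_m}\bigl|\mathcal{L}^m H(s)-\mathcal{L}H(s)\bigr|\to 0,
\]
which is the standing assumption. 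Trotter--Kurtz then delivers $\|T^m(t)\pi_m H-\pi_m P_t H\|_{L_m}\to 0$, uniformly for $t$ in compact intervals.

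Finally, evaluating this convergence at $t=T$ and at the grid point nearest the initial state $S_0$ gives exactly $\mathbb{E}[H(S_T^m)\mid S_0]=T^m(T)H(S_0)\to P_T H(S_0)=\mathbb{E}[H(S_T)\mid S_0]$, establishing the stated marginal convergence; upgrading to full weak convergence in Skorokhod space then follows from the corresponding Feller-process convergence theorem once the c\`adl\`ag property (again furnished by the Feller assumption) is invoked. I expect the main obstacle to be the clean treatment of the two boundary/regularity hypotheses --- that $\mathcal{L}$ ``does not vanish at zero and infinity'' and that $\lim_{s\to 0^+}\mathcal{L}H(s)=0$ --- which are precisely what is needed to control the behaviour of $\mathcal{L}H$ near the endpoints, ruling out mass escaping through or concentrating at the boundary so that the limit semigroup is genuinely Feller and the restriction operators $\pi_m$ interact correctly with the core $D$ there. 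Establishing generator convergence on a bona fide core (rather than merely on smooth compactly supported functions away from the endpoints) is the delicate point that these conditions are designed to handle.
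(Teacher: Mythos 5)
The paper itself contains no proof of this theorem: it is quoted, with citation, from \cite{mijatovic2013continuously}, and that reference establishes it by exactly the Ethier--Kurtz/Trotter--Kurtz semigroup-approximation route you describe, so there is no distinct internal argument to compare against. Your outline is correct and essentially matches that approach --- identifying $\mathcal{L}^m$ with $\mathbf{Q}(\theta)$ on $C(\mathbb{S}_m)$, reducing the Trotter--Kurtz generator-convergence hypothesis via the restriction operators $\pi_m$ to the assumed bound $\max_{s\in\mathbb{S}_m}\left|\mathcal{L}H(s)-\mathcal{L}^m H(s)\right|\to 0$ on a core, obtaining semigroup convergence uniformly on compact time intervals, and evaluating at $t=T$ at the grid point nearest $S_0$ --- and you correctly flag the endpoint hypotheses (nonvanishing of $\mathcal{L}$ at zero and infinity, and $\lim_{s\to 0^+}\mathcal{L}H(s)=0$) as the one point a complete write-up would still have to handle carefully, since they are what make the grid restriction of a core function compatible with the finite-difference generator at the boundary states $s_1$ and $s_m$.
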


\section{MLE Estimator}\label{sect:MLECTMC}
In this section we
construct the MLE estimate for $\theta$ based on the observed sample.
Let $\Delta>0$ and assume that we observe $\bm S^m:=(S^m_1,S^m_2,\ldots
,S^m_N)=(S^m_{\Delta},S^m_{2\Delta},\ldots,S^m_{N\Delta})$.  In particular, we assume that we have applied a binning procedure, as outlined in Section \ref{sect:Binning}, to arrive at sample belonging to the state space of the approximating CTMC.

Define the $m\times m$ probability transition matrix
\begin{equation}
\mathbf T(\Delta)=\exp(\mathbf Q\Delta)=\sum_{i=0}^{\infty}\frac{(\mathbf Q\Delta)^i}{i!}.
\end{equation}
Note that since our $\mathbf Q=\mathbf Q(\theta)$ is a function of $\theta$
so  is $\mathbf T(\Delta)$, and $\mathbf T(\Delta)_{ij}$ is the transition probability from the state $s_i$ to state $s_j$.
The likelihood of the sample is given by
\begin{equation}\label{eq:PSms}
P(\bm S^m|S^m_1,\mathbf Q)=\prod_{n=1}^{N-1} \mathbf T(\Delta)_{S^m_{n\Delta},S^m_{(n+1)\Delta}}.
\end{equation}
Here $\mathbf T(\Delta)_{S^m_{i\Delta},S^m_{(i+1)\Delta}}$ corresponds to $\mathbf T(\Delta)_{j,k}$, with $j=\mathcal I(S^m_{i\Delta})$ and $k = \mathcal I(S^m_{(i+1)\Delta})$, where we define the index mapping
\[
\mathcal I: \mathbb S_m \rightarrow \{1,\ldots, m\},
\]
which maps $\mathcal I(S^m_j)\rightarrow j$, the corresponding state index.
As in \cite{kalbfleisch1985analysis,mcgibbon2015efficient}, let $\mathbf C(\Delta)\in \mathbb N^{m\times m}$ be the matrix such that
\begin{equation}\label{eq:Cmat}
\mathbf C(\Delta)_{i,j}=\sum_{n=1}^{N-1}\1_{\{S^m_{n\Delta}=s_i\}}\cdot\1_{\{S^m_{(n+1)\Delta}=s_j\}},
\end{equation}
which counts the number of times in the sample that a transition from state $s_i$ to $s_j$ occurs.
We can then see from \eqref{eq:PSms} that
$$
P(\bm S^m|\mathbf Q,S^m_1)=\prod_{1\leq i,j \leq m}\mathbf T(\Delta)_{i,j}^{C(\Delta)_{i,j}}.
$$
The log likelihood function is 
\begin{align}\label{eq:Like}
 L_{N,m}(\theta,\Delta)&=\ln P(\bm S^m|\mathbf Q(\theta),S^m_1)\nonumber\\
&=\sum_{i,j}\mathbf C(\Delta)_{i,j}\ln \mathbf T(\Delta)_{i,j}\nonumber\\
&=\sum_{i,j}\left(\mathbf C(\Delta)\circ\ln \exp(\Delta \mathbf Q(\theta) \right)_{i,j}.
\end{align}
Here $\circ$ denotes the Hadamard matrix product
and $\ln(\mathbf A)$ is the element-wise logarithm.
The maximum likelihood estimator (MLE) is
\begin{equation}
\widehat{\theta}_{N,m}=\argmax_{\theta\in\Theta} L_{N,m}(\theta,\Delta).
\end{equation}

Next let's consider the eigendecomposition of $\mathbf Q$:
$$
\mathbf Q=\mathbf V\mathbf\Lambda\mathbf U^T,
$$
where the columns
of $\mathbf U$ and $\mathbf V$ are formed by the left and right
eigenvectors of $\mathbf Q$, respectively. Moreover, $\mathbf U^T=\mathbf V^{-1}$.
$\mathbf\Lambda=\diag(\lambda_1,\lambda_2,\ldots,\lambda_n)$
is a diagonal matrix formed by the set of eigenvalues of $\mathbf Q$.
Theorem \ref{diagonalizable-theorem} plays a crucial role in computing the probability transition matrix $\mathbf T(\Delta)$.
\begin{thm}\label{diagonalizable-theorem}
The tridiagonal matrix $\mathbf Q$ defined in \eqref{q-MCA-Heston} is diagonalizable.  In addition, $\mathbf Q$ has exactly $m$ simple  real eigenvalues satisfying $0\geq \lambda_1>\lambda_2>\ldots>\lambda_{m}$. Hence, the transitional matrix $\mathbf T(\Delta)$ has the following decomposition:
\begin{align}\label{eq:DiagRep}
 \mathbf T(\Delta)&=\bm V e^{\mathbf \Lambda \Delta} \bm U^T\quad\mbox{with}\quad \mathbf Q=\mathbf V\mathbf\Lambda\mathbf U^T,
\end{align}
where
$\mathbf \Lambda=\diag({\lambda_1,\lambda_2,\ldots,\lambda_{m}})$ is a diagonal matrix of the eigenvalues of $\mathbf Q$.
\end{thm}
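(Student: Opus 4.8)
The plan is to reduce the problem to the symmetric case by a diagonal similarity transformation, and then to invoke classical spectral theory for unreduced symmetric tridiagonal matrices.

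First I would record that under the step-size condition $\max_i k_i \le \min_\theta\min_i \sigma^2(s_i,\theta)/|\mu(s_i,\theta)|$ together with $\sigma^2(s_i,\theta)>0$, the consistency inequality \eqref{local-consistency} is \emph{strict}, so that the off-diagonal entries satisfy $q_{i,i-1}>0$ and $q_{i,i+1}>0$ for every interior index $i$. Thus $\mathbf Q$ is an unreduced (irreducible) tridiagonal matrix, which is the structural fact everything hinges on. Next I would construct a positive diagonal matrix $\mathbf D=\diag(d_1,\dots,d_m)$ symmetrizing $\mathbf Q$. Writing $(\mathbf D^{-1}\mathbf Q\mathbf D)_{ij}=(d_j/d_i)q_{ij}$, symmetry of the transformed matrix is equivalent to $d_i^2/d_{i-1}^2=q_{i-1,i}/q_{i,i-1}$ for each $i$; since both ratios are strictly positive by the previous step, this recurrence defines $d_i>0$ uniquely given $d_1>0$. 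Setting $\mathbf S:=\mathbf D^{-1}\mathbf Q\mathbf D$, I obtain a real symmetric tridiagonal matrix similar to $\mathbf Q$ whose off-diagonal entries equal $\sqrt{q_{i-1,i}q_{i,i-1}}\neq 0$.

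The spectral theorem then gives that $\mathbf S$ has real eigenvalues and an orthonormal eigenbasis; since $\mathbf Q=\mathbf D\mathbf S\mathbf D^{-1}$ is similar to $\mathbf S$, the matrix $\mathbf Q$ is diagonalizable with the same $m$ real eigenvalues. Simplicity follows from the three-term recurrence intrinsic to the unreduced tridiagonal structure: for any eigenvalue $\lambda$, the equation $\mathbf S x=\lambda x$ determines $x_2,\dots,x_m$ recursively from $x_1$ (the nonvanishing sub/super-diagonals make each step solvable), so every eigenspace is one-dimensional; combined with symmetry this forces all eigenvalues to be distinct, giving $\lambda_1>\lambda_2>\cdots>\lambda_m$. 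For the sign I would apply the Gershgorin circle theorem to $\mathbf Q$: using the $q$-property, the $i$-th disc is centered at $q_{ii}\le 0$ with radius $\sum_{j\ne i}q_{ij}=-q_{ii}$, hence contained in $\{\,\mathrm{Re}\,z\le 0\,\}$; as the eigenvalues are real they satisfy $\lambda\le 0$, so $\lambda_1\le 0$.

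Finally, writing the eigendecomposition as $\mathbf Q=\mathbf V\mathbf\Lambda\mathbf U^T$ with $\mathbf U^T=\mathbf V^{-1}$, the representation \eqref{eq:DiagRep} is immediate from the power series, since $(\mathbf V\mathbf\Lambda\mathbf U^T)^k=\mathbf V\mathbf\Lambda^k\mathbf U^T$ yields $\mathbf T(\Delta)=\sum_k (\mathbf Q\Delta)^k/k!=\mathbf V e^{\mathbf\Lambda\Delta}\mathbf U^T$. The main obstacle I anticipate is not the symmetrization or the spectral theory—both are standard once irreducibility is in hand—but rather the careful treatment of the \emph{boundary rows}: under an absorbing boundary a corner row may vanish, breaking the unreduced structure, and one must then argue on the remaining irreducible block (which is strictly sub-stochastic and hence contributes only strictly negative, simple eigenvalues) that no collision occurs with the zero eigenvalue produced by the absorbing state. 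Verifying strict positivity of the off-diagonals uniformly in $\theta\in\Theta$ is the other point requiring care.
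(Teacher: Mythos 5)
Your proof is correct in substance, but it takes a genuinely different (and more self-contained) route than the paper. The paper's proof is very short: it delegates the entire claim of diagonalizability and of $m$ distinct real eigenvalues to the citation \cite{cui2019general}, and the only thing actually proved in the text is the sign statement $\lambda\le 0$, via the maximal-component argument (an inline proof of Gershgorin's theorem applied to $\mathbf Q$, using $q_{ii}\le 0$ and zero row sums). You instead reconstruct the delegated part from scratch: strict positivity of the off-diagonals makes $\mathbf Q$ unreduced; the diagonal similarity $\mathbf D^{-1}\mathbf Q\mathbf D$ with $d_i^2/d_{i-1}^2=q_{i-1,i}/q_{i,i-1}$ symmetrizes $\mathbf Q$; the spectral theorem gives real eigenvalues and diagonalizability; and the three-term recurrence gives one-dimensional eigenspaces, hence simple, distinct eigenvalues. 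This is the standard argument for birth--death generators and is almost certainly what the cited reference does; what it buys is transparency about the hypotheses the citation hides, namely that the off-diagonals must not vanish. Your Gershgorin step and the passage from $\mathbf Q=\mathbf V\mathbf\Lambda\mathbf U^T$ to \eqref{eq:DiagRep} via the power series coincide with the paper. Two caveats on your write-up: first, your opening claim that the stated step-size condition plus $\sigma^2>0$ makes \eqref{local-consistency} strict is not literally true --- if the grid satisfies $k_i=\sigma^2(s_i,\theta)/|\mu(s_i,\theta)|$ with equality at some $i$ with $\mu(s_i,\theta)\neq 0$, an off-diagonal entry of \eqref{q-MCA-Heston} can vanish; you need the strict inequality $\max_i k_i<\min_\theta\min_i\sigma^2(s_i,\theta)/|\mu(s_i,\theta)|$ (or $\mu\equiv 0$ handled separately), which you partly acknowledge at the end. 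Second, your worry about absorbing boundary rows is well founded and is in fact a gap in the theorem's stated generality rather than in your argument: with \emph{two} absorbing endpoints the kernel of $\mathbf Q$ is at least two-dimensional, so the eigenvalues cannot all be simple; the paper's citation-based proof silently ignores this, whereas your block-by-block treatment is the honest way to handle it.
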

\begin{proof}
The part that $\mathbf Q$ has exactly $m$ distinct eigenvalues can be found in  \cite{cui2019general}.
For the second claim, let $u=(u_1,u_2,\ldots,u_m)$ be an
eigenvector corresponding to the eigenvalue $\lambda$ of $\mathbf Q$. We will
show that $\lambda\leq 0$. To this end, by choosing $i$ such that $|u_i|=\max\{|u_j|:j=1,\ldots,m\}$,  we have
$$
\displaystyle\sum_{j\neq i}^{m}\lambda_{ij}u_j=\lambda  u_i-\lambda_{ii}u_i.
$$
This implies that
$$
|\lambda  -\lambda_{ii}|\leq \displaystyle\sum_{j\neq i}^{m}\lambda_{ij}|\frac{u_j}{u_i}|\leq \sum_{j\neq i}^{m}\lambda_{ij}.
$$
Hence $\lambda$ is in the circle centered at $\lambda_{ii}$ with radius $\sum_{j\neq i}^{m}\lambda_{ij}$. Since  $\lambda_{ii}\leq 0$ and  $\sum_{j}^{m}\lambda_{ij}=0$, we have $\lambda\leq 0$. This completes the proof.
\end{proof}

\begin{figure}[h!t!b!]
\centering     
\subfigure[Brownian Motion]{\includegraphics[width=.51\textwidth]{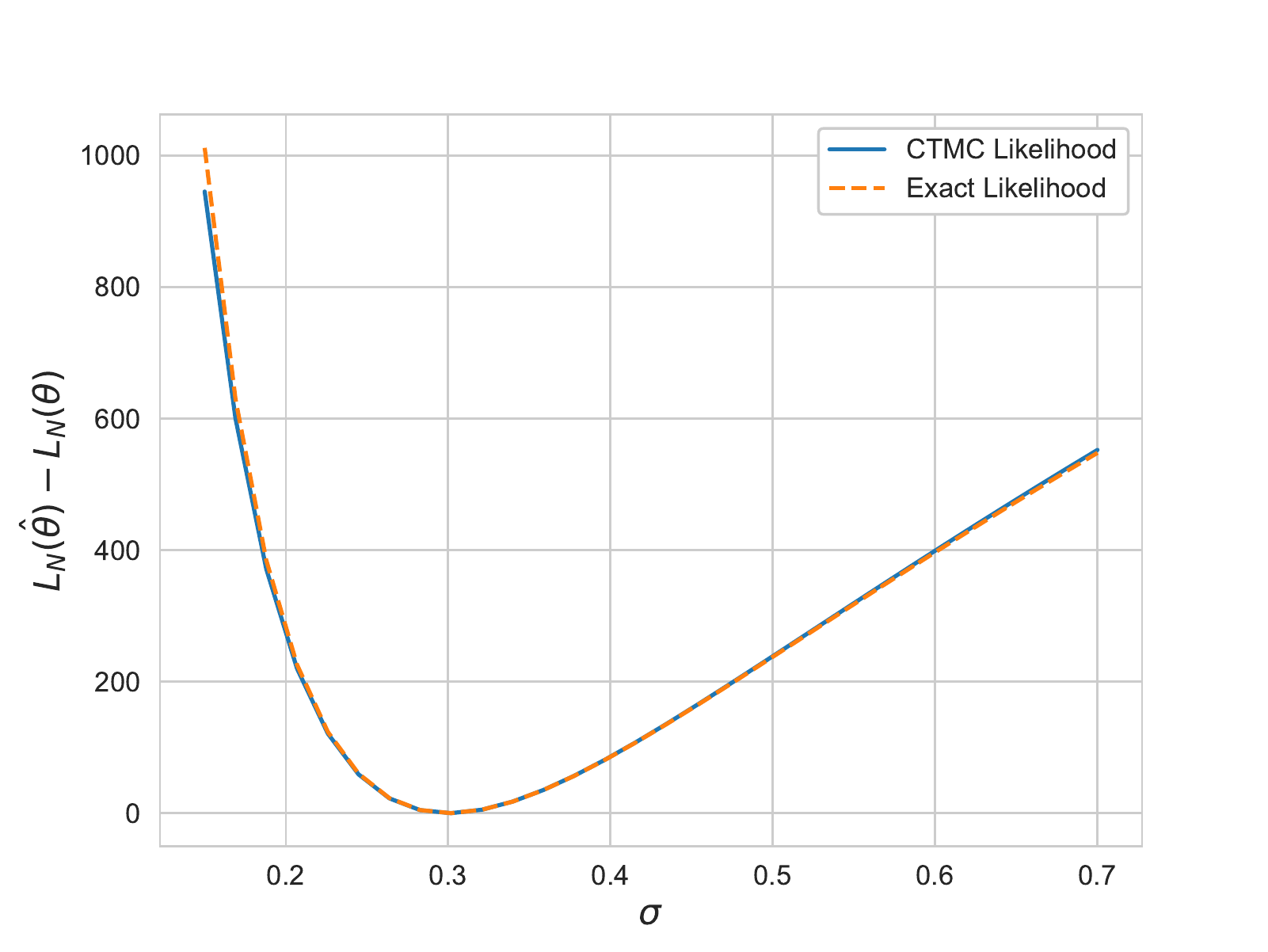}}\hspace{-1.8em}
\subfigure[Ornstein-Uhlenbeck]{\includegraphics[width=.51\textwidth]{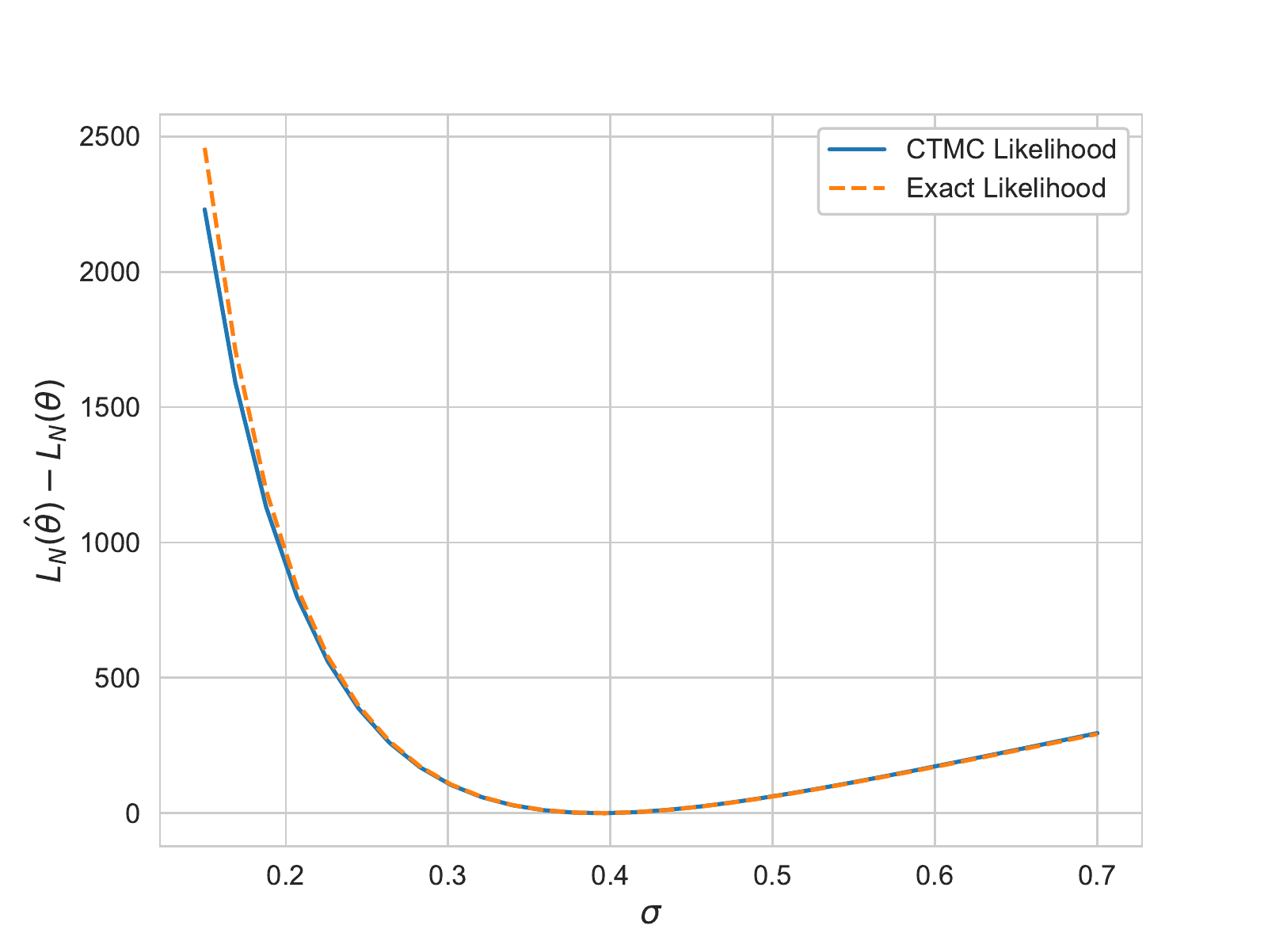}}
\caption{\small{Likelihood comparison of CTMC approximation vs. exact likelihood, as a function of single diffusion parameter $\sigma$. Sample size $N=1250$, $\Delta=1/250$.  (Left) Brownian motion: $S_0=10,\mu=0.08, \sigma=0.3$. (Right). OU: $S_0=0.2$, $\kappa = 4$, $\mu=0.2$, $\sigma=0.4$.
}}\label{fig:LogLike}
\end{figure}

\noindent\textbf{Example:} To make the CTMC approximation concrete, we consider two diffusion examples for which the transition probability density is known in closed-form, which permits the use of \emph{exact} maximum likelihood via $
L_N(\theta, \Delta):=\sum_{n=1}^{N-1}\ln p(\Delta,S_{n+1},S_{n}).
$ First we simulate the drifted Brownian motion, $dS_t = \mu dt + \sigma dW_t$, and second the Ornstein-Uhlenbeck (OU) model with $dS_t=\kappa(\mu-S_t)dt+\sigma dW_t$. In both cases the transition density is Gaussian. Figure \ref{fig:LogLike} displays the exact Likelihood function versus the CTMC approximation \eqref{eq:Like} with $m=250$ states. More specifically, we normalize each using $-L_N(\theta, \Delta) + L_N(\theta^*, \Delta)$, where $\theta^*=\theta^{MLE}$ is the MLE parameter set. It is clear from the figure that the CTMC approximation is very accurate compared with the exact transition density, and results in a tight approximation to the likelihood function, and hence the MLE estimate. 

\begin{figure}[h!t!b!]
\centering     
\subfigure[Brownian Motion]{\includegraphics[width=.531\textwidth]{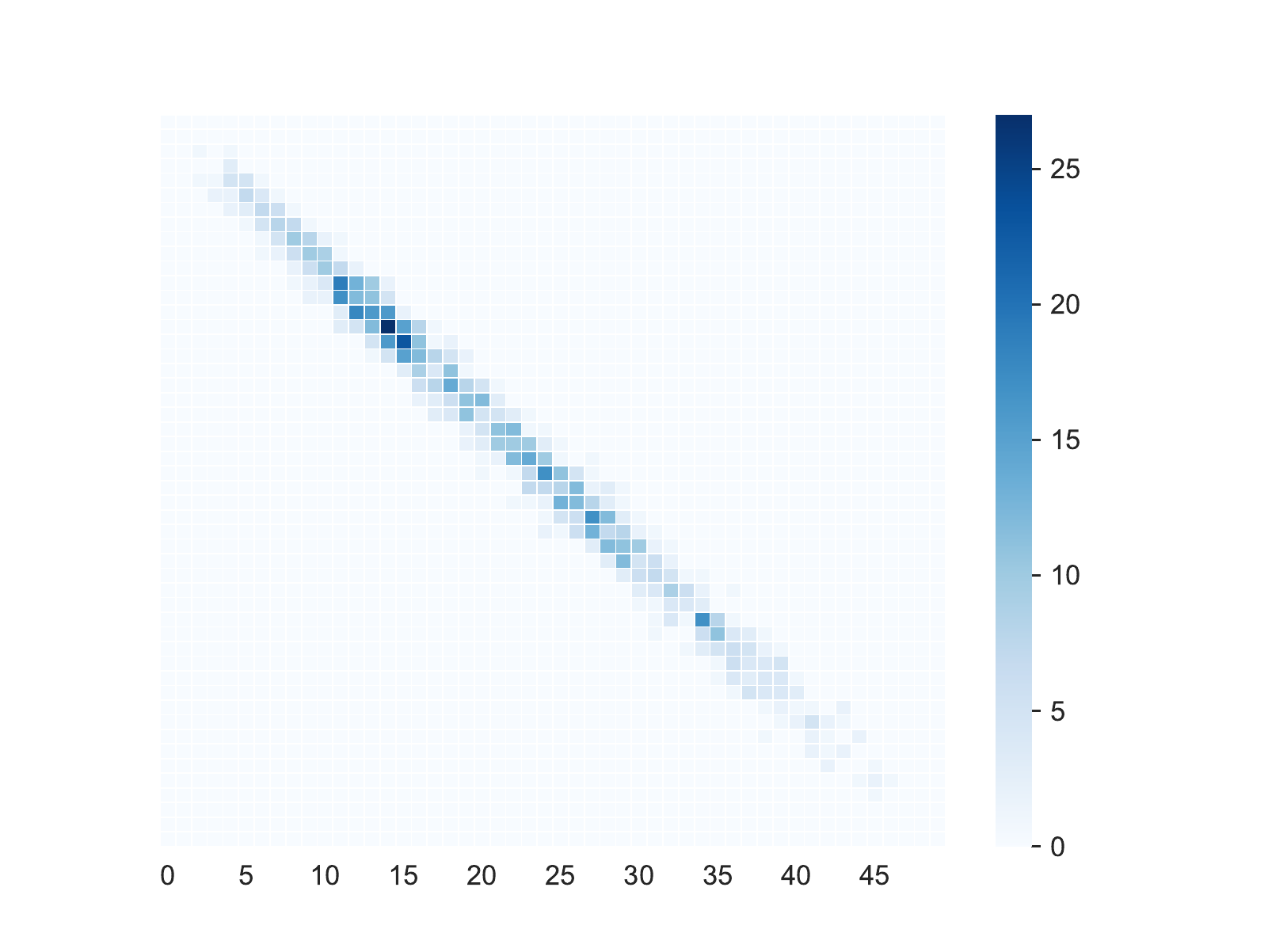}}\hspace{-3.4em}
\subfigure[Ornstein-Uhlenbeck]{\includegraphics[width=.531\textwidth]{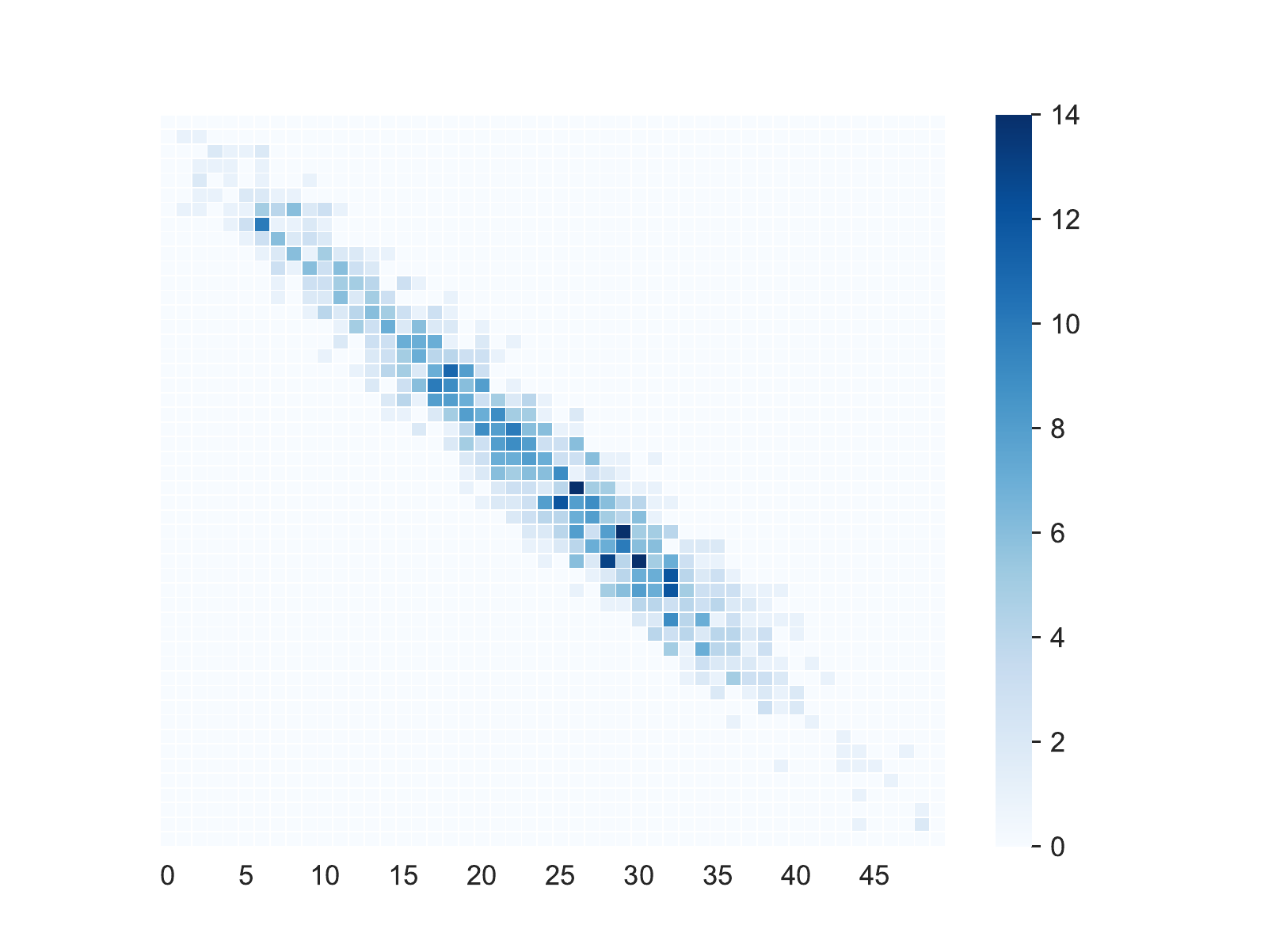}}
\caption{\small{Transition counts, $\mathbf C(\Delta)$ for two models, for a $m=50$ state CTMC, with $\Delta=1/250$, and $N=1250$. The x-axis denotes the state-space index, and the color bar shows the counts, $\mathbf C(\Delta)_{i,j}$ for each pair of states $i\rightarrow j$.
}}\label{fig:Transitions}
\end{figure}

We summarize the computational complexity of the CTMC-MLE algorithm in the next result. 
\begin{prop}[Computational Complexity] \label{Prop:CompC}
 Let $N_c$ denote the number of iterations for the MLE optimization to converge, where at each iteration a fixed number of likelihood evaluations are performed.  For CTMC-MLE, the cost is $\mathcal O(N + N_c\cdot( m^3 +  B\cdot m))$, where $B$ is the bandwidth of $\bm C(\Delta)$, that is $B:=\max_{i,j}\{|i-j|: \bm C(\Delta)_{i,j} > 0\}$, and typically $B << m$, as shown in Figure \ref{fig:Transitions}.   
\end{prop}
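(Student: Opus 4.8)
The plan is to prove this by a direct operation count, separating a one-time preprocessing phase (the source of the $N$ term) from the cost incurred at each optimizer iteration (the source of the $N_c$ term). First I would analyze preprocessing. The binning procedure of Section \ref{sect:Binning} maps each of the $N$ observations to its nearest grid point; on the uniformly spaced state-space $\mathbb{S}_m$ this index is recovered in $\mathcal{O}(1)$ per observation by direct rounding, so binning the entire path costs $\mathcal{O}(N)$. Assembling the count matrix $\mathbf{C}(\Delta)$ from \eqref{eq:Cmat} then amounts to a single sweep through the $N-1$ consecutive pairs $(S^m_{n\Delta}, S^m_{(n+1)\Delta})$, incrementing one entry each time, again $\mathcal{O}(N)$. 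The key point is that $\mathbf{C}(\Delta)$ depends only on the data, not on $\theta$, so it is formed once and reused throughout the optimization; this is precisely what isolates the $N$-term from the iteration-dependent cost.

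Next I would bound the cost of a single evaluation of the likelihood $L_{N,m}(\theta,\Delta)$. For a candidate $\theta$, forming the tridiagonal generator $\mathbf{Q}(\theta)$ from \eqref{q-MCA-Heston} touches only $\mathcal{O}(m)$ nonzero entries and is negligible. The dominant step is computing $\mathbf{T}(\Delta)=\exp(\mathbf{Q}(\theta)\Delta)$. By Theorem \ref{diagonalizable-theorem}, $\mathbf{Q}(\theta)$ is diagonalizable with $m$ distinct real eigenvalues, so $\mathbf{T}(\Delta)$ is obtained through the spectral representation \eqref{eq:DiagRep}. Although $\mathbf{Q}(\theta)$ is tridiagonal, its eigenvector matrices $\mathbf{V}$ and $\mathbf{U}$ are dense, so both the eigensolve and the reconstruction $\mathbf{V}e^{\mathbf{\Lambda}\Delta}\mathbf{U}^T$ are $\mathcal{O}(m^3)$ dense operations. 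This accounts for the $m^3$ term.

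The final step exploits the banded structure of $\mathbf{C}(\Delta)$ in the summation $L_{N,m}=\sum_{i,j}\mathbf{C}(\Delta)_{i,j}\ln\mathbf{T}(\Delta)_{i,j}$. By the very definition of the bandwidth $B:=\max_{i,j}\{|i-j|:\mathbf{C}(\Delta)_{i,j}>0\}$, every nonzero entry of $\mathbf{C}(\Delta)$ lies in the band $|i-j|\le B$, so at most $(2B+1)m=\mathcal{O}(Bm)$ terms contribute to the sum. Since the full matrix $\mathbf{T}(\Delta)$ is already available from the previous step, evaluating the logarithm and accumulating the weighted sum over exactly these $\mathcal{O}(Bm)$ entries costs $\mathcal{O}(Bm)$, rather than the naive $\mathcal{O}(m^2)$. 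A single likelihood evaluation therefore costs $\mathcal{O}(m^3+Bm)$; since each optimizer iteration performs a fixed number of such evaluations, the per-iteration cost is also $\mathcal{O}(m^3+Bm)$. Summing over the $N_c$ iterations and adding the one-time $\mathcal{O}(N)$ preprocessing gives the claimed $\mathcal{O}\!\left(N+N_c(m^3+Bm)\right)$.

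I do not anticipate a deep obstacle, as the statement is an operation count; the two points requiring care are (i) justifying that the matrix exponential is available in $\mathcal{O}(m^3)$, which is exactly what the diagonalizability in Theorem \ref{diagonalizable-theorem} secures, and (ii) making precise that only the banded entries of $\mathbf{C}(\Delta)$ enter the likelihood sum, so the accumulation is $\mathcal{O}(Bm)$ and not $\mathcal{O}(m^2)$. The empirical observation $B\ll m$ illustrated in Figure \ref{fig:Transitions} is not needed for the bound itself; it only explains why the $Bm$ term is negligible in practice for finely sampled diffusions, whose one-step transitions concentrate near the diagonal of $\mathbf{C}(\Delta)$.
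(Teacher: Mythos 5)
Your proposal is correct and follows essentially the same decomposition as the paper's proof: a one-time $\mathcal{O}(N)$ pass to build $\mathbf{C}(\Delta)$ and $B$, an $\mathcal{O}(m^3)$ matrix exponential at each of the $N_c$ iterations, and an $\mathcal{O}(B\cdot m)$ banded likelihood accumulation per iteration. You simply spell out details the paper leaves implicit (the $\mathcal{O}(1)$ binning per observation, the $\theta$-independence of $\mathbf{C}(\Delta)$, and the reliance on Theorem \ref{diagonalizable-theorem} for the $\mathcal{O}(m^3)$ eigendecomposition), which strengthens rather than alters the argument.
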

\begin{proof}
At initialization, the matrix $\bm C(\Delta)$ is pre-computed at a cost of $\mathcal O(N)$ (one pass through the sample), together with the bandwidth $B$.
The subsequent cost is driven the matrix exponential, $\bm T(\Delta)$, which can be computed with $\mathcal O(m^3)$ flops at each of the $N_c$ iterations. Computing the likelihood then costs only $B\cdot m$ at each iteration.
\end{proof}

From Proposition \ref{Prop:CompC}, there is an interesting computational advantage to the CTMC approximation over Exact MLE which grows with the sample size. The cost of Exact MLE is $\mathcal O(N_c\cdot C_p\cdot N)$, where  $C_p$ is the cost of evaluating the probability density (which can be significant in some cases, such as CIR).  By contrast, after initialization the CTMC method never revisits the sample (as it stores all sample information in the pre-computed $\bm C(\Delta)$), while Exact MLE requires a full pass back through the sample at each optimization step. The same argument holds in comparison with approximations such as Euler or the Shoji-Ozaki method.

\begin{remark} (Time-inhomogeneous diffusion)
In case the diffusion $S_t$ is time-inhomogeneous, that is, the dynamics of $S_t$ is described by
\begin{equation}
\label{Time-inhomogeneousSDE}
dS_t=\mu(t,S_t,\theta)dt+\sigma(t,S_t,\theta)dW_t, \quad 0\leq  t\geq T,
\end{equation}
then its corresponding infinitesimal generator  is given by
\begin{equation}
\label{time-inhomogeneousGen}
\mathcal L_t f(s)=\mu(t,s,\theta)\frac{\partial f}{\partial s}+
\frac{1}{2}\sigma^2(t,s,\theta)\frac{\partial^2f}{\partial s^2},\quad \forall f\in C^2_c(\mathbb S).
\end{equation}
We then build an approximating time-inhomogeneous CTMC with a generator
that is piecewise constant in time. More specifically, given a partition $\mathbb T=\{T_i\}_{i=0}^M$
with $T_0=0<T_1<\ldots<T_M=T$ of $[0,T]$,
let $\mathbf Q^{(j)}$ be an approximation of the infinitesimal generator $\mathcal L_{T_j}$. 
Then $S_t^m$ has a time-dependent generator given by
\begin{equation}
\label{time-dependentGen}
\mathbf Q_t=\sum_{i=1}^M \mathbf Q^{(j)}\mathbf I_{[T_{j-1},T_j)}(t),
\end{equation}
and all related quantities are defined analogously as the time-homogeneous case. 
For example,  the probability transition matrix
\begin{equation}
\mathbf T_t(\Delta)=\exp(\mathbf Q_t\Delta)=\sum_{i=0}^{\infty}\frac{(\mathbf Q_t\Delta)^i}{i!}.
\end{equation}
To keep the treatment focused, we will only consider the case of time-homogeneous diffusions, with the time-dependent case left as a natural extension.
\end{remark}

\subsection{MLE Convergence Analysis}\label{sect:Theory}
In this section,  we assume that the state space of $S_t$ is $[l,r]$\footnote{In case the domain of $S_t$
is, for example, of the form $(-\infty,+\infty)$ or $(0,\infty)$,
then we can choose $l,r$ such that $S_t\in [l,r]$
with high probability.} for $-\infty<l<r<\infty$.  Here we show that as the CTMC state space is refined ($m\rightarrow \infty$), the CTMC-MLE estimate will converge under reasonable regularity conditions to the Exact MLE estimate for any finite sample of size $N$. In the numerical experiments we will further demonstrate that accurate approximations, comparable with Exact MLE when it is available,  are obtained with a small (finite) number of states.
\begin{asp}
\label{Assumption-Coefficients}
$\mu(.,\theta)\in C^3([l,r]),\sigma^2(.,\theta)\in C^4([l,r])$.
\end{asp}
\begin{thm}
We have, under  Assumption \ref{Assumption-Coefficients},
$$
\widehat{\theta}_{N,m}\to\widehat{\theta}_N,  \quad\mbox{as}\quad m\to\infty.
$$
\end{thm}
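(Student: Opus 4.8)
The plan is to treat this as a standard argmax-consistency (extremum-estimator) argument. I will show that, up to an additive term that does not depend on $\theta$, the CTMC log-likelihood $L_{N,m}(\cdot,\Delta)$ converges uniformly on the compact set $\Theta$ to the exact log-likelihood $L_N(\cdot,\Delta)$, and then invoke the classical fact that uniform convergence of objective functions on a compact set, together with a \emph{unique} maximizer of the limit, forces the maximizers to converge. Concretely, any limit point of the sequence $\widehat\theta_{N,m}$ (which exists by compactness of $\Theta$) must maximize the limit $L_N(\cdot,\Delta)$; since $\widehat\theta_N$ is its unique maximizer by Assumption~2.1, every convergent subsequence has limit $\widehat\theta_N$, whence $\widehat\theta_{N,m}\to\widehat\theta_N$.

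The heart of the matter is to relate each CTMC transition probability $\mathbf T(\Delta)_{i,j}$ to the true transition density. Writing $B_j$ for the bin around the grid point $s_j$ (of length comparable to $k_j$), the CTMC is built so that its one-step law approximates the diffusion kernel integrated over the target bin; concretely I would establish
\begin{equation}
\mathbf T(\Delta)_{i,j}=p(\Delta,s_j,s_i)\,|B_j|+r^{m}_{i,j},
\qquad \frac{r^{m}_{i,j}}{|B_j|}\longrightarrow 0,
\end{equation}
uniformly in the relevant indices and in $\theta\in\Theta$. This is where the smoothness hypothesis (Assumption~\ref{Assumption-Coefficients}, $\mu\in C^3$, $\sigma^2\in C^4$) enters: it guarantees the local consistency $\max_{s}|\mathcal LH(s)-\mathcal L^{m}H(s)|\to 0$ with a quantitative rate for $H$ in the core of $\mathcal L$, which upgrades the weak-convergence statement of Section~\ref{sect:Continuous time Markov chain approximation} to convergence of the transition probabilities against the diffusion kernel, uniformly over $\Theta$ by continuity of the coefficients in $\theta$ and compactness of $[l,r]$.

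Taking logarithms and summing against the transition counts $\mathbf C(\Delta)$, with $i_n=\mathcal I(S^m_{n\Delta})$ and $j_n=\mathcal I(S^m_{(n+1)\Delta})$, I obtain
\begin{equation}
L_{N,m}(\theta,\Delta)=\sum_{n=1}^{N-1}\ln p(\Delta,s_{j_n},s_{i_n})
+\sum_{n=1}^{N-1}\ln|B_{j_n}|+\varepsilon_m(\theta).
\end{equation}
The crucial observation is that the middle sum is independent of $\theta$: the grid $\{s_i\}$, and hence the bin lengths $|B_j|$, are fixed once the sample range is binned and do not vary with $\theta$ (the constraint on $\max_i k_i$ only enforces the $q$-property for all $\theta$, it does not make the grid $\theta$-dependent). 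Replacing the binned points $s_{i_n},s_{j_n}$ by the true observations $S_{n\Delta},S_{(n+1)\Delta}$ costs at most $O(\max_i k_i)\to 0$ by uniform continuity of $p$, so the first sum converges to $L_N(\theta,\Delta)$, while $\varepsilon_m(\theta)\to 0$ uniformly on $\Theta$.

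Because an additive $\theta$-independent constant never alters the argmax, we conclude $\widehat\theta_{N,m}=\argmax_{\theta\in\Theta}\big(L_N(\theta,\Delta)+\varepsilon_m(\theta)+o(1)\big)$, and the argmax theorem of the first paragraph delivers the result. The main obstacle is the uniform control of the \emph{logarithm} of the transition probabilities: one must ensure that $p(\Delta,\cdot,\cdot)$ is bounded below by a positive constant on $[l,r]$ uniformly in $\theta$ (so that $\ln$ is Lipschitz on the relevant range and the relative errors $r^{m}_{i,j}/|B_j|$ pass through the logarithm), and one must make the local-consistency rate uniform in $\theta\in\Theta$; both rely on nondegeneracy of $\sigma$ together with the compactness and continuity built into the standing assumptions.
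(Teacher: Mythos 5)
Your proposal follows essentially the same route as the paper: compare the two log-likelihoods term by term via a uniform estimate relating the CTMC transition probabilities to the exact transition density, pass that estimate through the logarithm, and finish with the standard argmax argument (uniform convergence on the compact $\Theta$ plus the unique-maximizer assumption). The paper's proof is exactly this skeleton, with the key estimate imported from \cite{li2018error} after setting up the Sturm--Liouville eigenfunction expansion of $p(\Delta,\cdot,\cdot)$ and the matching eigenvector expansion of $\mathbf T(\Delta)$. Two of your refinements are genuine improvements in bookkeeping: (i) you state the estimate in the dimensionally correct form $\mathbf T(\Delta)_{i,j}\approx p(\Delta,s_j,s_i)\,|B_j|$ and then dispose of the resulting $\theta$-independent additive term $\sum_n\ln|B_{j_n}|$, whereas the paper quotes the bound as $\max_{i,j}|\mathbf T(\Delta)_{i,j}-p(\Delta,s_i,s_j)|\le C_\Delta k^2$ with no cell-width normalization --- which cannot hold as written, since for fixed $\Delta$ the probabilities $\mathbf T(\Delta)_{i,j}$ tend to $0$ as the grid is refined while the density does not; and (ii) you flag the need for uniformity in $\theta$ and for a positive lower bound on $p$ before taking logarithms, points the paper passes over silently (``since $\ln(\cdot)$ is continuous'').

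The genuine gap is in how you obtain the central estimate. You assert it can be had by ``upgrading'' the weak-convergence theorem for the CTMC approximation using local consistency of the generators. That upgrade is precisely the hard analytic step: weak convergence (convergence of expectations of bounded continuous functionals) together with $\max_s|\mathcal{L}H(s)-\mathcal{L}^{m}H(s)|\to 0$ does not, by itself, yield pointwise convergence of transition kernels at grid points, let alone uniformly in $(i,j)$ and in $\theta$ with a rate that survives the logarithm. This is exactly where the paper leans on the quantitative spectral/finite-difference analysis of \cite{li2018error}, and where Assumption \ref{Assumption-Coefficients} ($\mu\in C^3$, $\sigma^2\in C^4$) actually does its work; your proof must either cite such a result or reproduce that analysis --- as written, your displayed estimate is an assertion, not a consequence of what precedes it. A smaller point: the uniform lower bound $p(\Delta,\cdot,\cdot)\ge c>0$ on all of $[l,r]$ that you call for is false under the Dirichlet (absorbing) boundary conditions used in the paper, since the eigenfunction expansion forces $p(\Delta,s,y)\to 0$ as $y$ approaches $l$ or $r$; what you need, and all you need, is positivity at the finitely many observed transitions (which lie in the interior), combined with continuity in $\theta$ on the compact $\Theta$.
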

\begin{proof}
First define
$$
\mathfrak{m}_{\theta}(s)=\frac{2}{\sigma^2(s,\theta)}\exp
\left(\int_l^x\frac{2\mu(y,\theta)}{\sigma^2(y,\theta)} dy\right).
$$
For $f\in L^2([l,r],\mathfrak{m}_{\theta})$,
consider the following PDE with the initial and boundary conditions
\begin{equation}
\label{PDE-equation}
\left\{
\begin{array}{ll}
\left(\mathcal L-\frac{\partial }{\partial t}\right)u(t,s)=0,\quad  t>0,\quad s\in (l,r),\\
u(t,l)=u(t,r)=0,\quad t\geq 0,\\
u(0,s)=f(s),\quad s\in (l,r).
\end{array}
\right.
\end{equation}
It is well-known (e.g, \cite{sheu1991some,downes2009bounds}) that
the transition density $p(t,s,y)$ is the  fundamental solution to the parabolic equation 
\eqref{PDE-equation}.
Next, consider the Sturm-Liouville eigenvalue problem
\begin{equation}
\label{PDE-equation1}
\left\{
\begin{array}{ll}
\mathcal L\phi(s)=\eta \phi(s), \quad s\in (l,r),\\
\phi(l)=\phi(r)=0.
\end{array}
\right.
\end{equation}
This is a regular Sturm-Liouville problem which
has a denumerable sequence of simple eigenvalues satisfying
$0<\eta_1<\eta_2<\ldots$. Also let the $\phi_i(s)$ be the normalized
eigenfunction corresponding to the eigenvalue $\eta_k$. Note that 
$\phi_i(s)$ and $\phi_j(s)$ are orthogonal for $i\neq j$. 
Moreover, $p_t(s,y)$ can be expressed as a bilinear eigenfunction expansion (\cite{mckean1956elementary,linetsky2007spectral})
\begin{equation}
p(t,s,y)=\sum_{i=1}^{\infty}e^{-\eta_i t}\phi_i(s)\phi_i(y)\mathfrak{m}_{\theta}(y).
\end{equation}

Without loss of generality, assuming  that the grid is uniform, that is $k=k_i=S_{i+1}-S_i,  \forall i$, note that
$k\to 0\Leftrightarrow m\to\infty$.
Recall from Theorem \ref{diagonalizable-theorem} that $-\mathbf Q$
has exactly $m$ distinct eigenvalues, which are denoted by
$0<\eta_{1,k}<\ldots<\eta_{m,k}$. Here we use $\eta_{i,k}$ to emphasize that the eigenvalues depend on 
the space step size $k$.
For each $i\in\{1,2,\ldots,m\}$, define
$$
\mathfrak{m}_{k,i}=\frac{2}{\sigma^2(l)}\prod_{j=1}^{i-1}\frac{\sigma^2(s_j)+\mu(s_j)k}{\sigma^2(s_{j+1})-\mu(s_{j+1})k},
$$
then we have (see example \cite{li2018error}),
$$
T(\Delta)_{i,j}=\sum_{l=1}^m e^{-\eta_{l,k}\Delta}\varphi_{l,i,k}\varphi_{l,j,k}\mathfrak{m}_{k,j},
$$
where in the above equation $\varphi_{i,k}=(\varphi_{1,i,k},\ldots,\varphi_{m,i,k})^t$
denote the eigenvector corresponding to the eigenvalue $\eta_{i,k}$.
In \cite[Theorem 3.1]{li2018error}, the authors show, under the Assumption \ref{Assumption-Coefficients}, that
there exists a constant $C_{\Delta}>0$ such that
\begin{align*}
\displaystyle\max_{1\leq i,j\leq m}|T(\Delta)_{i,j}-p(\Delta,s_i,s_j)|\leq C_{\Delta}k^2.
\end{align*}
Since $\ln(.)$ is a continuous function we have
\begin{align*}
\displaystyle\max_{1\leq i,j\leq m}|\ln T(\Delta)_{i,j}-\ln p(\Delta,s_i,s_j)|\to 0 \quad\mbox{as}\quad m\to\infty.
\end{align*}
As a result, we have
\begin{equation}
\sum_{0\leq i\leq N-1} \left(\ln T(\Delta)_{i,i+1}-\ln p(\Delta,s_i,s_{i+1})\right)\to 0\quad\mbox{as}\quad m\to \infty.
\end{equation}
Therefore $$
\widehat{\theta}_{N,m}\to\widehat{\theta}_N,  \quad\mbox{as}\quad m\to\infty.
$$
This completes the proof of the theorem.
\end{proof}

\subsection{Quasi-Newton Method}
We now describe a Quasi-Newton optimization approach for determining the approximate maximum likelihood solution with respect to the CTMC likelihood $L_{N,m}$. As demonstrated in Section \ref{sect:Theory}, the maximum likelihood estimate for the CTMC will converge to the true MLE as $m\rightarrow \infty$.
Let $\mathbf H$ be the $d\times d$ Hessian of $L_{N,m}$. That is,
$$
\mathbf H=(\mathbf H_{u,v}(\theta,\Delta))_{1\leq u,v\leq m}:=\left(\frac{\partial^2 L_{N,m}(\theta,\Delta)}{\partial \theta_u\partial\theta_v}
\right)_{1\leq u,v\leq d}.
$$
We will use a quasi-Newton method to approximate 
$\widehat{\theta}_{N,m}$, which follows the update rule:
\begin{equation}
\theta^{(k+1)}=\theta^{(k)}-[\mathbf H(\theta^{(k)})]^{-1}\nabla  L_{N,m}(\theta^{(k)}),
\end{equation}
where $\mathbf H$ is replaced with a suitable approximation.
The first term we estimate is the gradient, $\nabla  L_{N,m}$. 
To derive the gradient of $L_{N,m}$, we will require $\frac{\partial \mathbf Q}{\partial \theta_u}$.
Recall that $\theta=(\theta_1,\theta_2,\ldots,\theta_d)$. If we let $\partial_u\sigma=\partial\sigma/\partial\theta_u$, then by taking derivatives 
we have:
\begin{lem}\label{lem:PartQpartthet}
The derivatives of the generator defined in \eqref{q-MCA-Heston} with respect to model parameters are given by
$$
\frac{\partial \mathbf Q}{\partial \theta_u}=[q^\prime_{ij}]_{m\times m}, \quad 1\leq u \leq d,
$$
where 
\begin{equation}
\label{q-MCA-Heston}
q^\prime_{ij}=\left\{
\begin{array}{ll}
\displaystyle\frac{\partial_u\mu^-(s_i,\theta)}{k_{i-1}}+\displaystyle\frac{\partial_u\sigma^2(s_i,\theta)-(k_{i-1}\partial_u\mu^-(s_i,\theta)+k_i\partial_u\mu^+(s_i,\theta))}{k_{i-1}(k_{i-1}+k_i)},& \;\; \text{if}\;\; j=i-1,\\
\displaystyle\frac{\partial_u\mu^+(s_i,\theta)}{k_{i}}+\displaystyle\frac{\partial_u\sigma^2(s_i,\theta)-(k_{i-1}\partial_u\mu^-(s_i,\theta)+k_i\partial_u\mu^+(s_i,\theta))}{k_{i}(k_{i-1}+k_i)},& \;\; \text{if}\;\; j=i+1,\\
-q^\prime_{i,i-1}-q^\prime_{i,i+1},&\;\; \text{if}\;\;j = i,\\
0, & \;\; \text{if}\;\; j\neq i-1,i,i+1.\\
\end{array}
\right.
\end{equation}
\begin{proof}
The proof follows directly upon differentiating the terms in \eqref{q-MCA-Heston}.
\end{proof}
\end{lem}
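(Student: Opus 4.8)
The plan is to exploit the fact that the entire $\theta$-dependence of the generator enters only through the three scalar quantities $\mu^+(s_i,\theta)$, $\mu^-(s_i,\theta)$, and $\sigma^2(s_i,\theta)$ evaluated at the fixed grid point $s_i$, whereas the spacings $k_{i-1}$ and $k_i$ are determined by the chosen state space and are therefore constants independent of $\theta$. Inspecting \eqref{q-MCA-Heston}, each off-diagonal entry $q_{i,i-1}$ and $q_{i,i+1}$ is an affine combination of these three quantities with coefficients that are rational functions of the $k$'s. Since $\partial/\partial\theta_u$ is linear and commutes with such affine combinations, applying it term by term reproduces exactly the stated expressions, with $\mu^\pm$ and $\sigma^2$ replaced by $\partial_u\mu^\pm$ and $\partial_u\sigma^2$. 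First I would write out this one-line computation explicitly for the cases $j=i-1$ and $j=i+1$.

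For the remaining entries I would use the structure of $\mathbf Q$ directly. The diagonal is defined by $q_{ii}=-q_{i,i-1}-q_{i,i+1}$, so linearity of the derivative immediately gives $q'_{ii}=-q'_{i,i-1}-q'_{i,i+1}$, matching the claimed entry; and for $|i-j|>1$ the entries are identically zero, so their derivatives vanish. Thus $\partial\mathbf Q/\partial\theta_u$ inherits the tridiagonal sparsity pattern of $\mathbf Q$ itself, and this part of the argument is purely formal.

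The one genuine subtlety, and the step I expect to require care, is the differentiability of the positive and negative parts $\mu^\pm(s_i,\theta)=\max\{\pm\mu(s_i,\theta),0\}$, since $x\mapsto x^+$ has a kink at the origin. At any $\theta$ for which $\mu(s_i,\theta)\neq 0$ the maps $\theta_u\mapsto\mu^\pm(s_i,\theta)$ are smooth, and one has the clean identities $\partial_u\mu^+=\1_{\{\mu(s_i,\theta)>0\}}\,\partial_u\mu$ and $\partial_u\mu^-=-\1_{\{\mu(s_i,\theta)<0\}}\,\partial_u\mu$, so the claimed formula holds verbatim. The only problematic parameters are those in the exceptional set $\bigcup_i\{\theta:\mu(s_i,\theta)=0\}$, where the one-sided derivatives of $\mu^\pm$ disagree unless $\partial_u\mu(s_i,\theta)$ also vanishes. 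I would therefore state the lemma as valid for all $\theta$ outside this set, which is lower-dimensional and hence negligible for the optimization, or alternatively absorb the caveat into the definition of $\partial_u\mu^\pm$ by adopting the appropriate one-sided convention on the kink. With this single point addressed, the differentiation is otherwise entirely routine, exactly as the author's one-line proof asserts.
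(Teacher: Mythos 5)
Your proposal is correct and follows essentially the same route as the paper, whose entire proof is the single sentence that the result ``follows directly upon differentiating the terms'' --- i.e., exactly the term-by-term linearity argument you spell out for the off-diagonal, diagonal, and zero entries. Your added caveat about the non-differentiability of $\mu^\pm(s_i,\theta)$ at parameters where $\mu(s_i,\theta)=0$ is a genuine refinement that the paper silently ignores, and restricting the lemma to the complement of that exceptional set (or fixing a one-sided convention there) is the right way to make the statement literally true.
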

Next define for $\Delta>0$ the $m\times m$ matrix
$\mathbf X(\mathbf\Lambda,\Delta)$, which is given by
\begin{equation}
\mathbf X(\mathbf \Lambda,\Delta)_{i,j}=
\left\{
\begin{array}{cc}
\Delta \exp(\Delta\lambda_i) &\mbox{if}\quad i=j,\\
\frac{\exp(\Delta\lambda_i)-\exp(\Delta\lambda_j)}{\lambda_i-\lambda_j}&\mbox{if}\quad i\neq j.
\end{array}
\right.
\end{equation}
\begin{lem}
The gradient $\nabla  L_{N,m}(\theta) \in \mathbb R^{d\times 1}$ is given by
\begin{equation}\label{eq:partLthe}
\frac{\partial L_{N,m}(\theta,\Delta)}{\partial \theta_u}=
\sum_{1\leq i,j \leq m}\left(\partial \mathbf Q/\partial \theta_u\circ \mathbf Z \right)_{i,j}, \quad 1\leq u\leq d,
\end{equation}
where
\begin{equation*}
\mathbf Z=\mathbf U((\mathbf V^T\mathbf D\mathbf U)\circ \mathbf X(\Lambda,\Delta))\mathbf V^T),
\end{equation*}
and $\mathbf D=(D_{i,j})$ is the $m\times m$ matrix defined by
$$
D_{i,j}=\mathbf C(\Delta)_{i,j}/\mathbf T_{i,j}, \quad 1\leq i, j \leq m.
$$
\end{lem}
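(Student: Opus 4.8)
The plan is to reduce the computation of $\partial L_{N,m}/\partial\theta_u$ to the Fréchet derivative of the matrix exponential $\mathbf{T}(\Delta)=\exp(\Delta\mathbf{Q}(\theta))$ and then repackage the result using the eigendecomposition from Theorem \ref{diagonalizable-theorem}. First I would differentiate the log-likelihood \eqref{eq:Like} entrywise. Since $L_{N,m}=\sum_{i,j}\mathbf{C}(\Delta)_{i,j}\ln\mathbf{T}(\Delta)_{i,j}$ and only $\mathbf{T}(\Delta)$ depends on $\theta$, the chain rule gives
\[
\frac{\partial L_{N,m}}{\partial\theta_u}=\sum_{i,j}\frac{\mathbf{C}(\Delta)_{i,j}}{\mathbf{T}(\Delta)_{i,j}}\,\frac{\partial\mathbf{T}(\Delta)_{i,j}}{\partial\theta_u}=\sum_{i,j}D_{i,j}\,\Big(\frac{\partial\mathbf{T}(\Delta)}{\partial\theta_u}\Big)_{i,j},
\]
which writes the gradient as the Frobenius pairing $\langle\mathbf{D},\partial\mathbf{T}(\Delta)/\partial\theta_u\rangle$ with $\mathbf{D}$ as defined in the statement. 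Everything then hinges on an explicit formula for $\partial\mathbf{T}(\Delta)/\partial\theta_u$, for which $\partial\mathbf{Q}/\partial\theta_u$ is supplied by Lemma \ref{lem:PartQpartthet}.

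For the exponential, I would invoke the Duhamel (integral) representation of the derivative of a matrix exponential,
\[
\frac{\partial}{\partial\theta_u}\exp(\Delta\mathbf{Q})=\int_0^1 e^{\alpha\Delta\mathbf{Q}}\,\Delta\frac{\partial\mathbf{Q}}{\partial\theta_u}\,e^{(1-\alpha)\Delta\mathbf{Q}}\,d\alpha,
\]
and substitute $\mathbf{Q}=\mathbf{V}\mathbf{\Lambda}\mathbf{U}^T$ with $\mathbf{U}^T=\mathbf{V}^{-1}$ from Theorem \ref{diagonalizable-theorem}, so that $e^{\alpha\Delta\mathbf{Q}}=\mathbf{V}e^{\alpha\Delta\mathbf{\Lambda}}\mathbf{U}^T$. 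Factoring $\mathbf{V}$ to the left and $\mathbf{U}^T$ to the right reduces the integral to one over the diagonal factors only. Evaluating the $(i,j)$ entry then amounts to computing $\Delta\int_0^1 e^{\alpha\Delta\lambda_i}e^{(1-\alpha)\Delta\lambda_j}\,d\alpha$, which equals $\Delta e^{\Delta\lambda_i}$ when $i=j$ and the divided difference $(e^{\Delta\lambda_i}-e^{\Delta\lambda_j})/(\lambda_i-\lambda_j)$ when $i\neq j$. These are exactly the entries of $\mathbf{X}(\mathbf{\Lambda},\Delta)$, so the integral collapses to a Hadamard product and I obtain $\partial\mathbf{T}(\Delta)/\partial\theta_u=\mathbf{V}\big((\mathbf{U}^T(\partial\mathbf{Q}/\partial\theta_u)\mathbf{V})\circ\mathbf{X}(\mathbf{\Lambda},\Delta)\big)\mathbf{U}^T$. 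The clean $i\neq j$ formula relies on the eigenvalues being simple, which is guaranteed by Theorem \ref{diagonalizable-theorem}.

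Finally I would convert $\langle\mathbf{D},\partial\mathbf{T}(\Delta)/\partial\theta_u\rangle$ into the claimed form by moving the similarity factors and the Hadamard product onto $\mathbf{D}$. Using the trace identity $\langle A,BCD\rangle=\langle B^TAD^T,C\rangle$ together with the self-adjointness of the Hadamard product, $\langle A,B\circ C\rangle=\langle A\circ B,C\rangle$, I would first push $\mathbf{V}$ and $\mathbf{U}^T$ across to turn $\mathbf{D}$ into $\mathbf{V}^T\mathbf{D}\mathbf{U}$, then swap $\mathbf{X}$ onto that factor, then push $\mathbf{U}^T$ and $\mathbf{V}$ across once more, arriving at $\langle\partial\mathbf{Q}/\partial\theta_u,\mathbf{Z}\rangle$ with $\mathbf{Z}=\mathbf{U}\big((\mathbf{V}^T\mathbf{D}\mathbf{U})\circ\mathbf{X}(\mathbf{\Lambda},\Delta)\big)\mathbf{V}^T$; rewriting this pairing entrywise yields \eqref{eq:partLthe}. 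I expect the main obstacle to be organizational rather than conceptual: keeping precise track of which factor transposes to what under the cyclic and adjoint moves (the symmetry of $\mathbf{X}$ helps here), and confirming the Duhamel formula applies — this is where the distinctness of the eigenvalues and the smoothness afforded by Assumption \ref{Assumption-Coefficients} enter, ensuring $\mathbf{Q}(\theta)$ is differentiable with a smoothly varying eigendecomposition.
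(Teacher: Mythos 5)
Your proof is correct, and its skeleton matches the paper's: both reduce the gradient to the Frobenius pairing of $\mathbf D$ with $\partial\mathbf T(\Delta)/\partial\theta_u$, express that derivative as $\mathbf V\bigl((\mathbf U^T(\partial\mathbf Q/\partial\theta_u)\mathbf V)\circ\mathbf X(\mathbf{\Lambda},\Delta)\bigr)\mathbf U^T$, and then shuffle factors using $\sum_{i,j}(\mathbf A\circ\mathbf B)_{i,j}=\Tr(\mathbf A\mathbf B^T)$ and $\Tr(\mathbf A^T(\mathbf B\circ\mathbf C))=\Tr(\mathbf B^T(\mathbf A\circ\mathbf C))$, exploiting the symmetry of $\mathbf X$. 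The difference is that the paper simply \emph{cites} the intermediate gradient formula from \cite{kalbfleisch1985analysis,mcgibbon2015efficient} and performs only the final rearrangement, whereas you prove that formula from scratch: the chain rule gives the pairing with $\mathbf D$, and Duhamel's integral representation of $\partial\exp(\Delta\mathbf Q)/\partial\theta_u$, evaluated in the eigenbasis of Theorem \ref{diagonalizable-theorem}, collapses to the divided-difference matrix $\mathbf X(\mathbf{\Lambda},\Delta)$. This makes your argument self-contained, and it is consistent with the paper's own later computation in the Approximate Hessian subsection, which reaches the same identity by differentiating the power series $\sum_k(\mathbf Q\Delta)^k/k!$ term by term; Duhamel and the series manipulation are two routes to the same formula. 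One small imprecision at the end: Duhamel's formula requires only differentiability of $\theta\mapsto\mathbf Q(\theta)$ (supplied by Lemma \ref{lem:PartQpartthet}), not Assumption \ref{Assumption-Coefficients}, which concerns smoothness in the state variable $s$, nor a smoothly varying eigendecomposition; simplicity of the eigenvalues is needed only so that the off-diagonal entries of $\mathbf X$ are well-defined divided differences.
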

\begin{proof}
This result 
follows from  \cite{kalbfleisch1985analysis, mcgibbon2015efficient}, which provides
\begin{equation}
\frac{\partial L_{N,m}(\theta,\Delta)}{\partial \theta_u}
=\sum_{1\leq i,j \leq m}\left(D\circ \mathbf V((\mathbf U^T(\partial \mathbf Q/\partial \theta_u)\mathbf V)\circ \mathbf X(\mathbf\Lambda,\Delta))\mathbf U^T \right)_{i,j}
\end{equation}
Equation \eqref{eq:partLthe} follows using the facts that $\sum_{i,j}(\mathbf A\circ \mathbf B)_{i,j}=Tr(\mathbf A\mathbf B^T)$
and $Tr(\mathbf A^T(\mathbf B\circ \mathbf C))=Tr(\mathbf B^T(\mathbf A\circ C))$.
\end{proof}
We note that the matrix $\mathbf Z$ is independent of $\theta$, hence can be computed beforehand.
This will reduce the cost of computing $\nabla  L_{N,m}$
substantially.

\subsection{Approximate Hessian }
Recall that,
$$
\mathbf H=(\mathbf H_{u,v}(\theta,\Delta))_{1\leq u,v\leq d}:=\left(\frac{\partial^2 L_{N,m}(\theta,\Delta)}{\partial \theta_u\partial\theta_v}
\right)_{1\leq u,v\leq d}.
$$
Using
\begin{align}
 L_{N,m}(\theta,\Delta)=\sum_{1\leq i,j\leq m}\mathbf C(\Delta)_{i,j}\ln \mathbf T(\Delta)_{i,j},
 \end{align}
it can be seen that the Hessian matrix is given by
\begin{equation}
\label{equation:Hessian}
\mathbf H_{u,v}(\theta,\Delta)=\sum_{i=1}^m\sum_{j=1}^m\mathbf C_{i,j}(\Delta)\left(\frac{\partial ^2\mathbf T_{i,j}/\partial \theta_u\partial \theta_v}{\mathbf T_{i,j}}
-\frac{(\partial \mathbf T_{i,j}/\partial \theta_u)(\partial \mathbf T_{i,j}/\partial \theta_v)}{\mathbf T^2_{i,j}}
 \right).
\end{equation}
From Lemma \ref{lem:PartQpartthet} and \cite{kalbfleisch1985analysis}, it follows that
\begin{align*}
\frac{\partial \mathbf T(\Delta)}{\partial\theta_u}&=
 \sum_{i=1}^{\infty}\frac{\partial}{\partial\theta_u}\left(\frac{(\mathbf Q\Delta)^i}{i!}\right)\\
 &=\sum_{i=1}^{\infty}\sum_{j=0}^{i-1} \mathbf Q^j \frac{\partial\mathbf Q}{\partial\theta_u}
 \mathbf Q^{i-1-l}\frac{\Delta^i}{i!}\\
 &=\sum_{i=1}^\infty\sum_{j=0}^{i-1}\bm V\bm \Gamma^j\bm U^T\frac{\partial\mathbf Q}{\partial\theta_u}
 \bm V \bm \Gamma^{i-1-j}\bm U^T\frac{\Delta^i}{i!}\\
 &=\bm V\left( \sum_{i=1}^\infty\sum_{j=0}^{i-1}\bm \Gamma^j\bm U^T\frac{\partial\mathbf Q}{\partial\theta_u}
 \bm V \bm \Gamma^{i-1-j}\frac{\Delta^i}{i!}\right)\bm U^T\\
 &=\mathbf V\left((\mathbf U^T\frac{\partial\mathbf Q}{\partial\theta_u}\mathbf V)\circ \mathbf X(\mathbf\Lambda,\Delta)\right)\mathbf U^T.
\end{align*}
From the equation \eqref{equation:Hessian} above, it can be seen that
direct calculation of the Hessian
is expensive since one
must compute the first derivative as well as the second 
derivative of $\mathbf T$. Following \cite{kalbfleisch1985analysis},
let $C_i(\Delta)=\sum_{j}\mathbf C_{i,j}(\Delta)$,
we can approximate $\mathbf C_{i,j}(\Delta)\approx\mathbf T_{i,j} C_i$.
And note that $\sum_{j}\partial ^2\mathbf T_{i,j}/\partial \theta_u\theta_v=0$, we can approximate
\begin{equation}
\mathbf H_{u,v}(\theta,\Delta)\approx-\sum_{i,j}\frac{C_i(\Delta)}{\mathbf T_{i,j}}
\frac{\partial \mathbf T_{i,j}}{\partial \theta_u}
\frac{\partial\mathbf T_{i,j}}{\partial \theta_v}=:{\mathbf {\widehat H}_{u,v}}(\theta,\Delta),
\quad 1\leq u,v\leq d.
\end{equation}
As a result, we use the following update
\begin{equation}
\theta^{(k+1)}=\theta^{(k)}-[\mathbf {\widehat H}(\theta^{(k)})]^{-1}\nabla  L_{N,m}(\theta^{(k)}).
\end{equation}

\section{Numerical Examples}\label{section:Numerical Examples}
This section provides various examples to demonstrate the CTMC-MLE framework. The first set of experiments aim to establish the closeness of CTMC-MLE to Exact MLE, in cases for which the exact transition density is known. We then consider examples for which approximations must be used, and we compare the CTMC-MLE method to the well-established approaches of Kessler \cite{kessler1997estimation} and Shoji-Ozaki \cite{shoji1998statistical}.

All experiments are conducted using Python 3.7.  To support future R\&D, we have developed an open source python library which includes the estimation procedures described below, for a wide variety of models. The library, called \textbf{pymle}, is freely available at: https://github.com/jkirkby3/pymle.

\subsection{Comparison to Exact MLE}
In these experiments we compare the CTMC-MLE estimator to Exact MLE, for several examples for which Exact MLE is feasible. We consider the Geometric Brownian Motion, Ornstein-Uhlenbeck, and Cox-Ingersoll-Ross processes in this section. The point of this comparison is to illustrate the closeness of CTMC-MLE to Exact MLE when it does exist, and towards this end we seek to control all other sources of variation in the estimation procedure.  Hence, the Exact MLE will be estimated by solving the numerical optimization $\widehat{\theta}_N:=\argmax_{\theta\in\Theta} L_N(\theta, \Delta)$, with $L_N(\theta, \Delta):=\sum_{n=1}^{N-1}\ln p(,\Delta,S_{n+1},S_{n})$, using a closed-form expression for $p(\Delta, s^\prime,s)$. 
\\
\\
\textbf{Experimental Design:} For each model, we consider three realistic estimation scenarios, each with $T=5$ years of data: 1) a sampling frequency of $52$ times per year, as is typical of weekly economic data, 2) a frequency of $250$, typical of daily financial market data, and 3) a frequency of $1000$, to represent a high-frequency intra-day sampling scenario.
 For each trial (which we repeat 500 times for each experiment), we simulate a sample path which is used for both CTMC and Exact MLE, and we apply the same estimation procedure for both (using a constrained trust-region solver).  When exact simulation is unavailable, we use the Milstein scheme which we combine with sub-stepping at a rate of 10 additional steps between each sample point to reduce bias in the simulated trajectories.

\subsubsection{Geometric Brownian Motion (GBM)}
In the first example, we compare the CTMC-MLE with Exact MLE for the GBM process, with parameters $\mu\in \mathbb R, \sigma >0$, and dynamics given by
$$
dS_t= S_t\mu dt+S_t \sigma dW_t.
$$ 
This model is widely used in economics and finance to model the dynamics of a risky asset. For any $t'>t\geq 0$,  $\Delta:=t'-t$, and $\theta:=(\mu, \sigma)$,  the log-normal transition density is given in closed form by
\[
p(\Delta, s^\prime,s)=\frac{1}{s' \sigma_\Delta\sqrt{2\pi}}\exp\left(-\frac{(\ln(s') - \mu_\Delta(s))^2}{2\sigma_\Delta^2} \right)
\]
where $\mu_\Delta(s):=\ln(s) + \left(\mu - \frac{1}{2}\sigma^2\right)\Delta$, and $\sigma_\Delta:=\sigma \sqrt{\Delta}$.

In Table \ref{table:GBMExactVC} we compare the two estimators.  For GBM, both estimators have some (comparable) difficulty estimating the drift (in small samples), but we can see that the diffusion parameter is very accurately estimated by both, with low standard deviation. In particular, the exact and CTMC estimates are nearly identical (in terms of error and standard deviation).  Overall, the CTMC-MLE approximates Exact MLE very well.


\begin{center}
\begin{table}[h!t!b!]
\centering
  \addtolength{\tabcolsep}{1pt}
\scalebox{.9}{
\begin{tabular}{llccccccc}
    \hline
\toprule
  & & \multicolumn{1}{c}{}&\multicolumn{3}{c}{CTMC-MLE} & \multicolumn{3}{c}{Exact MLE}\\
    \cmidrule(r{1em}){4-6}\cmidrule{7-9}
  $N$ & $1/\Delta$ & True Param. &$\hat \theta_{N,m}$& $\hat \theta_{N,m} - \theta $ & $\text{sd}(\hat \theta_{N,m})$&  $\hat \theta_{N}$& $\hat \theta_{N} - \theta $ & $\text{sd}(\hat \theta_{N} )$ \\
    \toprule
 260 & 52 & $\mu=0.030$  & 0.025 & 0.005 & 0.057  & 0.027 & 0.003 & 0.056  \\ 
  & & $\sigma=0.150$   & 0.150 & 0.000 & 0.007  & 0.150 & 0.000 & 0.007  \\ \hline
 1250 & 250 & $\mu=0.030$  & 0.021 & 0.008 & 0.059  & 0.023 & 0.007 & 0.059  \\ 
  & & $\sigma=0.150$   & 0.150 & 0.000 & 0.003  & 0.150 & 0.000 & 0.003  \\ \hline
 5000 & 1000 & $\mu=0.030$  & 0.024 & 0.006 & 0.054  & 0.024 & 0.006 & 0.054  \\ 
  & & $\sigma=0.150$   & 0.150 & -0.000 & 0.002  & 0.150 & -0.000 & 0.002  \\ 
    \bottomrule
\end{tabular}}
\caption{\small{GBM - comparison of CTMC vs Exact MLE, with $m=300$ states. Results from 500 repeated simulations, with the same randomized initial guess. Initial $S_0=100$.  Fixed time horizon $T=5$ with varying sampling frequency, $1/\Delta$.}} 
  \label{table:GBMExactVC}
\end{table}
\FloatBarrier
\end{center}

\begin{figure}[h!t!b!]
\centering     
\subfigure{\includegraphics[width=.51\textwidth]{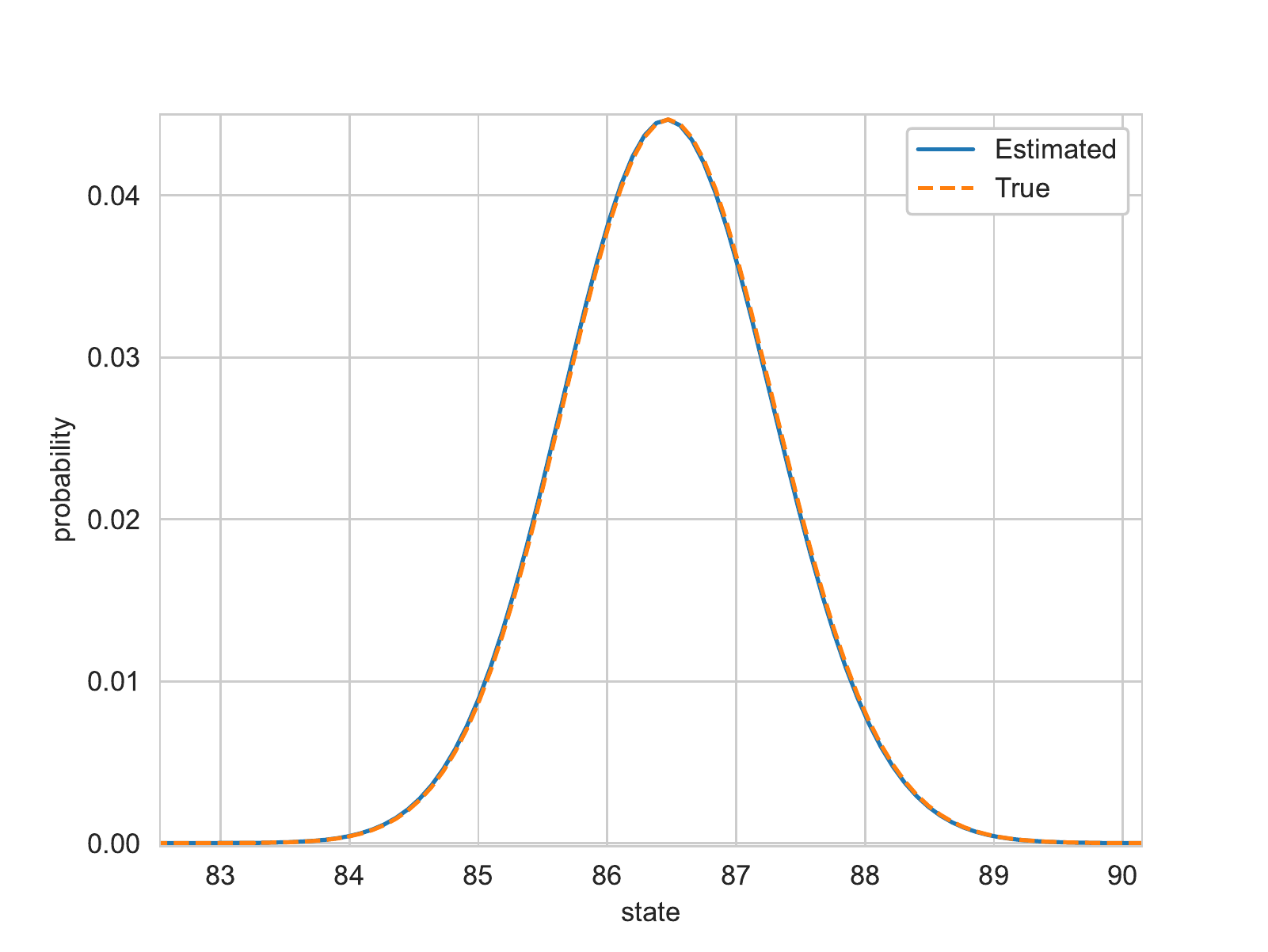}}\hspace{-1.8em}
\subfigure{\includegraphics[width=.51\textwidth]{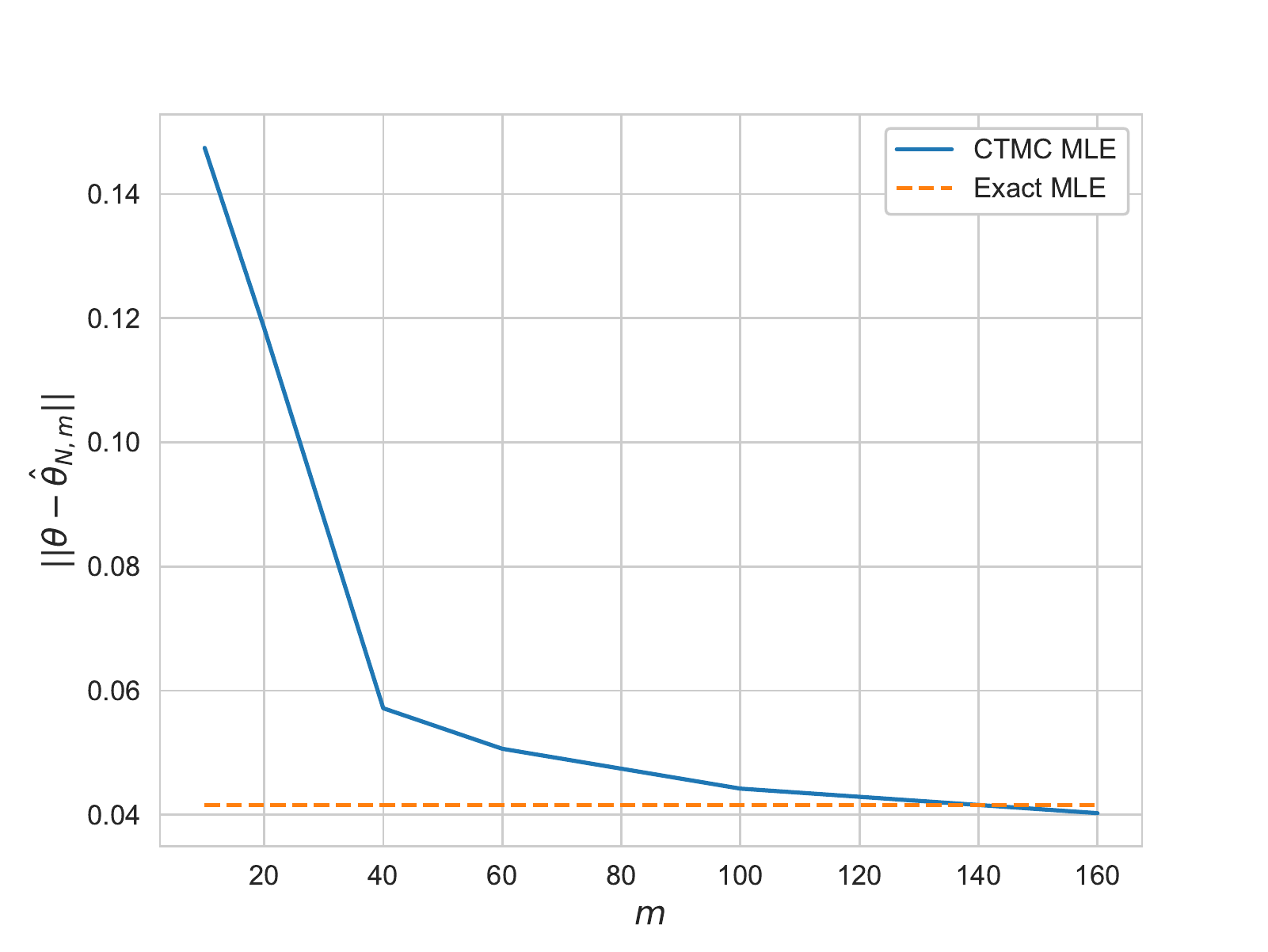}}
\caption{\small{GBM Example. Left: Transition density, $\bm T(\Delta)$ with $m=300$ state CTMC, and estimate $\hat \theta_{N,m}$. Right: convergence of $\hat \theta_{N,m}$ as function of states, $m$. Params: $\Delta=1/250$, and $N=1250$. 
}}\label{fig:GBMexample}
\end{figure}


In the left panel of Figure \ref{fig:GBMexample}, we show how the estimated CTMC (with $\hat \theta_{N,m}$) transition density approximates the true transition density for the case of $1/\Delta = 250$. 
The right panel demonstrates the fast convergence of the CTMC approximation, which we measure using the $l^2$ norm $||\theta - \hat\theta_{N,m}||:=\frac{1}{R}\sum_{r=1}^R\left(\sum_{i=1}^d(\theta(i) - \hat\theta^r_{N,m}(i))^2 \right)^{1/2}$, which is averaged over $R=500$ replications. We denote by $\hat\theta^r_{N,m}(i)$ the estimate of parameter $i\in\{1,\ldots, d\}$ for the $r^{th}$ replication of the experiment. In the numerical experiments, we utilize $m=300$ states in the approximating CTMC.

\subsubsection{Ornstein-Uhlenbeck (OU)}
In the second example we consider the mean-reverting OU model, with
dynamics ($\sigma>0$, $\kappa,\mu\in \mathbb R$)
$$
dS_t=\kappa(\mu-S_t)dt+\sigma dW_t.
$$
Among its numerous applications, OU is commonly used to model the instantaneous short interest rate (\cite{vasicek1977equilibrium}) in economics, as well as commodity prices \cite{schwartz1997stochastic}.
Its true transition density $p(s^\prime|s;\theta)$ is Gaussian,
\[
p(\Delta, s^\prime,s)= \frac{1}{\sigma_\Delta\sqrt{2\pi}}\exp\left(-\frac{(s' - \mu_\Delta(s))^2}{2\sigma_\Delta^2} \right),
\]
with mean $\mu_\Delta(s):=\mu +(s-\mu)e^{-\kappa\Delta}$, and variance $\sigma^2_\Delta:=\frac{\sigma^2}{2\kappa}(1-e^{-2\kappa \Delta})$, allowing us to determine the Exact MLE.
\\

Table \ref{table:OUExactVC} compares CTMC-MLE with Exact MLE, and the result is nearly an identical match between the two estimators. Without using Python vectorization for either algorithm, Exact MLE requires about 45 seconds on average for this example (when $1/\Delta=1000$), compared with less than 15 seconds for CTMC-MLE.\footnote{Vectorization can be used for languages such as Python to speed up the computation of Exact MLE significantly. For this comparison, we avoid using such techniques as they obscure the true computational cost for comparison purposes.}

\begin{center}
\begin{table}[h!t!b!]
\centering
  \addtolength{\tabcolsep}{1pt}
\scalebox{.9}{
\begin{tabular}{llccccccc}
    \hline
\toprule
 & &  \multicolumn{1}{c}{}&\multicolumn{3}{c}{CTMC-MLE} & \multicolumn{3}{c}{Exact MLE}\\
   \cmidrule(r{1em}){4-6}\cmidrule{7-9}
 $N$ & $1/\Delta$ &  True Param. &$\hat \theta_{N,m}$& $\hat \theta_{N,m} - \theta $ & $\text{sd}(\hat \theta_{N,m})$&  $\hat \theta_{N}$& $\hat \theta_{N} - \theta $ & $\text{sd}(\hat \theta_{N})$ \\
    \toprule
 260 & 52 & $\kappa=4.000$  & 4.604 & -0.604 & 1.079  & 4.706 & -0.706 & 1.087  \\ 
  & & $\mu=0.200$   & 0.198 & 0.002 & 0.046  & 0.202 & -0.002 & 0.046  \\ 
  & & $\sigma=0.400$   & 0.399 & 0.001 & 0.018  & 0.404 & -0.004 & 0.018  \\   \hline 
 1250 & 250 & $\kappa=4.000$  & 4.690 & -0.690 & 1.105  & 4.710 & -0.711 & 1.099  \\ 
  & & $\mu=0.200$   & 0.198 & 0.002 & 0.046  & 0.201 & -0.001 & 0.046  \\ 
  & & $\sigma=0.400$   & 0.399 & 0.001 & 0.008  & 0.400 & -0.000 & 0.008  \\ \hline
 5000 & 1000 & $\kappa=4.000$  & 4.632 & -0.632 & 1.080  & 4.617 & -0.617 & 1.086  \\ 
  & & $\mu=0.200$   & 0.200 & -0.000 & 0.043  & 0.202 & -0.002 & 0.043  \\ 
  & & $\sigma=0.400$   & 0.401 & -0.001 & 0.004  & 0.400 & 0.000 & 0.004  \\ 
    \bottomrule
\end{tabular}}
\caption{\small{OU - comparison of CTMC-MLE vs Exact MLE, with $m=300$ states. Results from 500 repeated simulations, with the same randomized initial guess. Initial $S_0=0.2$. Fixed time horizon $T=5$ with varying sampling frequency, $1/\Delta$.}} 
  \label{table:OUExactVC}
\end{table}
\FloatBarrier
\end{center}

\subsubsection{Cox-Ingersoll-Ross (CIR)}
The dynamics of $S_t$ under CIR is given by $(\kappa, \mu, \sigma > 0$)
$$
dS_t=\kappa(\mu-S_t)dt+\sigma\sqrt{S_t} dW_t.
$$
It can be shown that $S_t\geq 0$ almost surely, and the CIR model is
widely used to model the short term interest rates (\cite{cox2005theory}) or equity volatilities (\cite{heston1993closed}). The true transition density function is given by
$$
p(\Delta, s^\prime,s)=
\frac{e^{\kappa \Delta}}{2c(\Delta)}
\left(\frac{s^\prime e^{\kappa \Delta}}{s} \right)^{(d-2)/4}
\exp\left(-\frac{s+s^\prime e^{\kappa \Delta}}{2c(\Delta)} \right)
I_{d/2-1}\left(\frac{\sqrt{s s^\prime e^{-\kappa \Delta}}}{c(\Delta)} \right),
$$
where
$$
c(\Delta)=\frac{\sigma^2}{4\kappa}(e^{\kappa \Delta}-1), \quad d=\frac{4\kappa \mu}{\sigma^2},
$$
and 
$$
I_{\gamma}(x)=\sum_{i=0}^{\infty}
\frac{(x/2)^{2i+\gamma}}{i!\Gamma(i+\gamma+1)}
$$
is the modified Bessel function of the first kind of order $\gamma$. Numerical evaluation of 
$p(s^\prime|s; \theta)$ is delicate, and is best implemented using the exponentially damped Bessel function.

Table \ref{table:CIRExactVC} summarizes the results for CTMC-MLE, and the estimates obtained from Exact MLE.  As with the GBM and OU examples, CTMC-MLE provides very similar estimates as Exact MLE for the CIR model.
 In Figure \ref{fig:ReSim} we illustrate qualitative similarity between the estimated model and the true (unknown) process for CIR (Left). After simulating the sample trajectory and estimating the coefficients, we re-simulate the process using the same seed as the original sample but with the estimated coefficients. 

\begin{figure}[h!t!b!]
\centering     
\subfigure{\includegraphics[width=.51\textwidth]{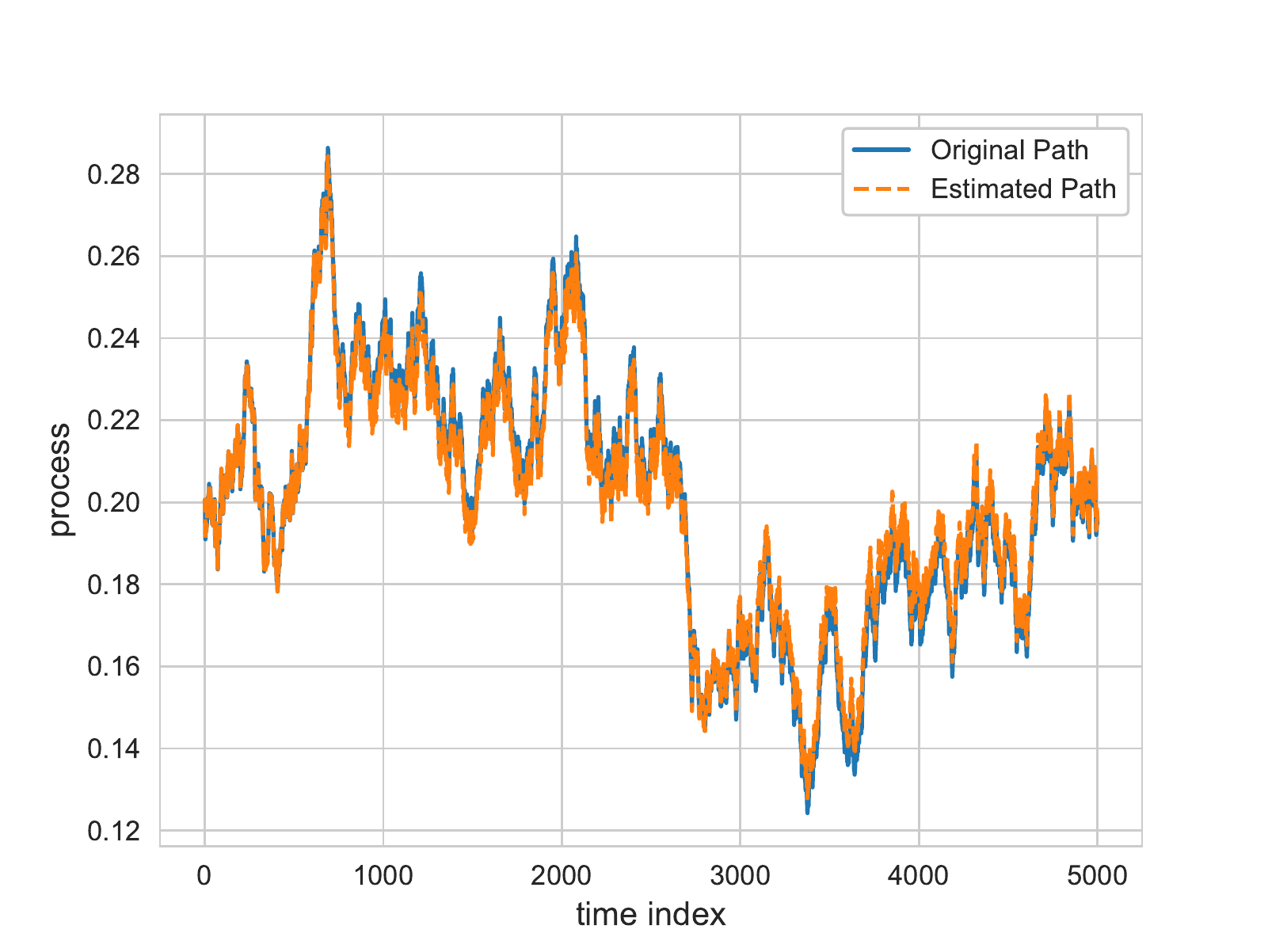}}\hspace{-1.8em}
\subfigure{\includegraphics[width=.51\textwidth]{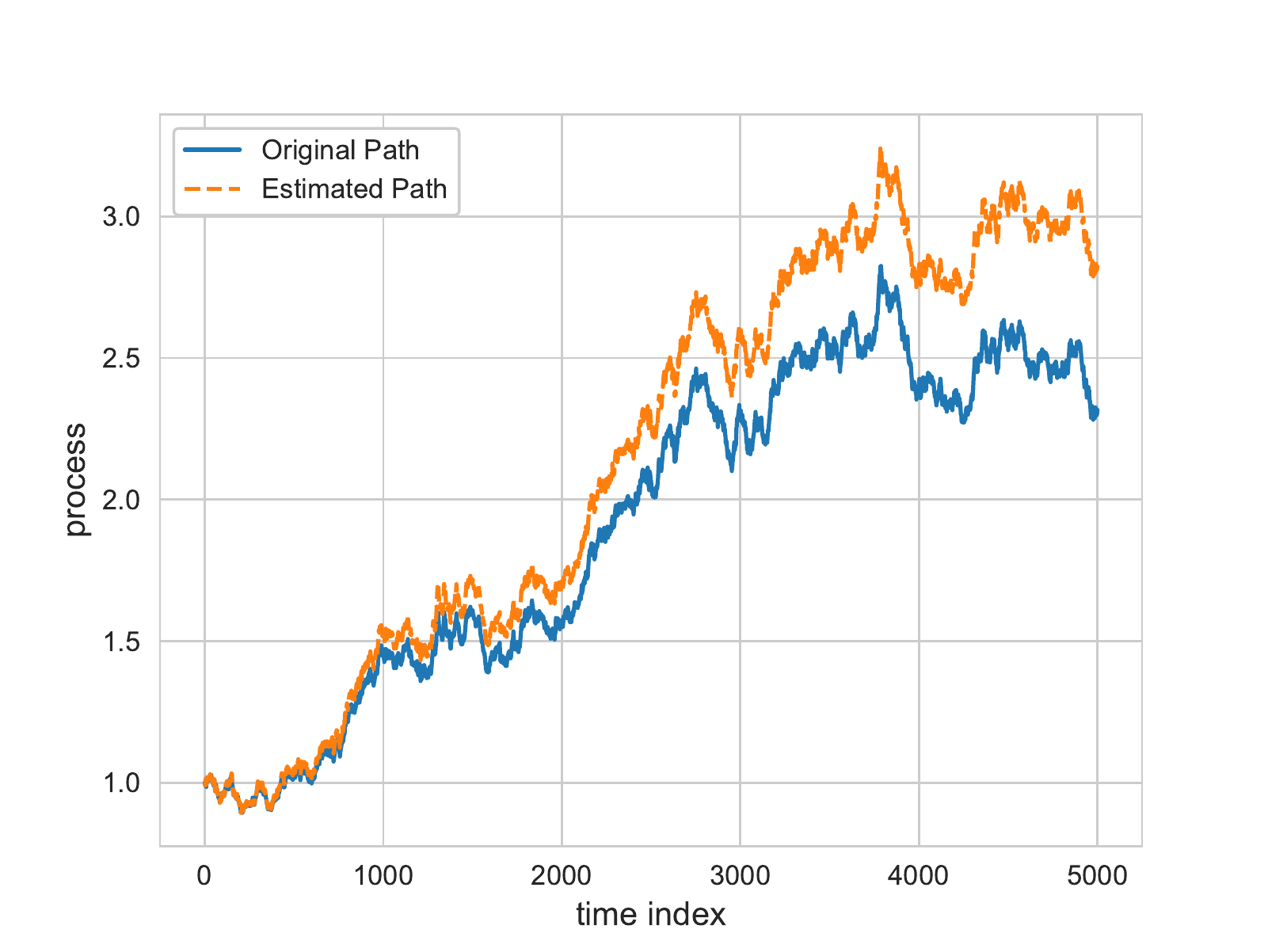}}
\caption{\small{Estimated re-simulation example. Comparison of the sample path used for estimation, and the re-simulated path using CTMC-MLE estimated parameters. Left: CIR, with estimated params (2.313, 0.201, 0.149). Right: CKLS with estimated params (0.0132, 0.1342 0.2139, 0.5410) and the same initial seed. $\Delta=1/1000$, and $N=5000$. 
}}\label{fig:ReSim}
\end{figure}

\begin{center}
\begin{table}[h!t!b!]
\centering
  \addtolength{\tabcolsep}{1pt}
\scalebox{.9}{
\begin{tabular}{llccccccc}
    \hline
\toprule
 & &  \multicolumn{1}{c}{}&\multicolumn{3}{c}{CTMC-MLE} & \multicolumn{3}{c}{Exact MLE}\\
   \cmidrule(r{1em}){4-6}\cmidrule{7-9}
 $N$ & $1/\Delta$ &  True Param. &$\hat \theta_{N,m}$& $\hat \theta_{N,m} - \theta $ & $\text{sd}(\hat \theta_{N,m})$&  $\hat \theta_{N}$& $\hat \theta_{N} - \theta $ & $\text{sd}(\hat \theta_{N})$ \\
    \toprule
 260 & 52 & $\kappa=2.000$  & 2.618 & 0.618 & 1.072  & 2.493 & 0.493 & 0.995  \\ 
  & & $\mu=0.200$   & 0.198 & -0.002 & 0.014  & 0.187 & -0.013 & 0.018  \\ 
  & & $\sigma=0.150$   & 0.150 & 0.000 & 0.007  & 0.189 & 0.039 & 0.051  \\  \hline
 1250 & 250 & $\kappa=2.000$  & 2.759 & 0.759 & 1.111  & 2.735 & 0.735 & 1.094  \\ 
  & & $\mu=0.200$   & 0.199 & -0.001 & 0.015  & 0.200 & 0.000 & 0.015  \\ 
  & & $\sigma=0.150$   & 0.150 & 0.000 & 0.003  & 0.150 & 0.000 & 0.003  \\ \hline
 5000 & 1000 & $\kappa=2.000$  & 2.736 & 0.736 & 1.228  & 2.579 & 0.579 & 1.161  \\ 
  & & $\mu=0.200$   & 0.200 & 0.000 & 0.015  & 0.200 & 0.000 & 0.019  \\ 
  & & $\sigma=0.150$   & 0.151 & 0.001 & 0.002  & 0.150 & 0.000 & 0.002  \\ 
    \bottomrule
\end{tabular}}
\caption{\small{CIR - comparison of CTMC vs Exact MLE, with $m=300$ states. Results from 500 repeated simulations, with the same randomized initial guess. Initial $S_0=0.15$. Fixed time horizon $T=5$ with varying sampling frequency, $1/\Delta$.}} 
  \label{table:CIRExactVC}
\end{table}
\FloatBarrier
\end{center}

%

\subsection{Comparison with Psuedo-Likelihood}
Except for a handful of special cases, Exact MLE is unavailable, and some form of approximation is required. To obtain benchmarks, we utilize various ``Psuedo-Likelihood" approaches based on approximations of the SDE. For example, using an Euler approximation of the SDE, yields the approximate density
\[
p(\Delta, s^\prime,s)\approx \frac{1}{\sqrt{2\pi \Delta \sigma^2(s,\theta)}}\exp\left(-\frac{(s'-s - \mu(s,\theta)\Delta)^2}{2\Delta \sigma^2(s,\theta)} \right).
\]
Euler's approximation only works well for very small $\Delta$, so we also
compare against the more accurate methods of Kessler \cite{kessler1997estimation} and Shoji-Ozaki \cite{shoji1998statistical}.\footnote{The Elerian method was also tested, based on a Milstein approximation to the SDE, but we found some numerical instabilities with this approach.}

\subsubsection{Chan-Karolyi-Longstaff-Sanders (CKLS)}\label{sect:CKLS}
Another interesting example is the Chan-Karolyi-Longstaff-Sanders (CKLS) family of models (see \cite{chan1992empirical}), which is a four-parameter extension of the CEV model given by
$$
dS_t = (\theta_1 + \theta_2 S_t)dt + \theta_3 S_t^{\theta_4} dW_t.
$$
This model does not admit an explicit transition density, except in the case where $\theta_1=0$ or $\theta_4=1/2$.  We assume that $\theta_3>0$, and the process is positive as long as $\theta_1, \theta_2 >0$ and $\theta_4 > 1/2$.  Table \ref{table:CKLS} compares the CTMC-MLE method with that of Kessler and Shoji-Ozaki for the CKLS example.  All methods perform comparably well, and we notice that similar to the GBM case, the drift parameters, $\theta_1,\theta_2$, are considerably harder to estimate than the diffusion parameters, $\theta_3, \theta_4$.  The right panel of Figure \ref{fig:ReSim} illustrates this difficulty,  where we notice that the diffusive characteristics of the true and re-simulated path are very similar, but we over-estimate $\theta_2$, which is the portion of the drift that depends on $S_t$.   The CTMC -MLE method performs comparatively well at estimating the parameters of the diffusion component, namely $\theta_3$ and $\theta_4$.  Across the numerical experiments, this is a recurring phenomenon, even when compared to Exact MLE, and it is most notable for lower sampling frequencies. 

\begin{center}
\begin{table}[h!t!b!]
\centering
  \addtolength{\tabcolsep}{1pt}
\scalebox{.9}{
\begin{tabular}{llccccccc}
    \hline
\toprule
 & &  \multicolumn{1}{c}{}&\multicolumn{2}{c}{CTMC-MLE} & \multicolumn{2}{c}{Kessler} & \multicolumn{2}{c}{Shoji-Ozaki}\\
   \cmidrule(r{1em}){4-5}\cmidrule(r{1em}){6-7}\cmidrule{8-9}
 $N$ & $1/\Delta$ &  True Param. & $\hat \theta_{N,m} - \theta $ & $\text{sd}(\hat \theta_{N,m})$& $\hat \theta_{N} - \theta $ & $\text{sd}(\hat \theta_{N})$ & $\hat \theta_{N} - \theta $ & $\text{sd}(\hat \theta_{N})$ \\
    \toprule
  120 & 24 & $\theta_1=0.010$  &  0.117 & 0.113  &  0.108 & 0.106   &  0.123 & 0.115  \\ 
  & & $\theta_2=0.100$   &  -0.079 & 0.050  &  -0.057 & 0.051   &  -0.071 & 0.058  \\ 
  & & $\theta_3=0.200$   &  -0.006 & 0.018  &  -0.008 & 0.015   &  -0.006 & 0.019  \\ 
  & & $\theta_4=0.600$   &  0.105 & 0.289  &  0.161 & 0.143   &  0.132 & 0.301  \\ \hline
 260 & 52 & $\theta_1=0.010$  &  0.113 & 0.112  &  0.042 & 0.048   &  0.124 & 0.112  \\ 
  & & $\theta_2=0.100$   &  -0.083 & 0.048  &  -0.031 & 0.055   &  -0.076 & 0.054  \\ 
  & & $\theta_3=0.200$   &  -0.002 & 0.013  &  0.014 & 0.031   &  -0.002 & 0.013  \\ 
  & & $\theta_4=0.600$   &  0.043 & 0.196  &  0.160 & 0.102   &  0.074 & 0.204  \\ \hline
 1250 & 250 & $\theta_1=0.010$  &  0.123 & 0.119  &  0.096 & 0.101   &  0.124 & 0.119  \\ 
  & & $\theta_2=0.100$   &  -0.083 & 0.051  &  -0.060 & 0.054   &  -0.079 & 0.055  \\ 
  & & $\theta_3=0.200$   &  0.001 & 0.007  &  -0.001 & 0.011   &  -0.001 & 0.007  \\ 
  & & $\theta_4=0.600$   &  -0.001 & 0.100  &  0.042 & 0.101   &  0.014 & 0.100  \\ 
    \bottomrule
\end{tabular}}
\caption{\small{CKLS - comparison of CTMC-MLE vs Euler and Shoji-Ozaki, with $m=300$ states. Results from 500 repeated simulations, with the same randomized initial guess. Initial $S_0=1.0$. Fixed time horizon $T=5$ with varying sampling frequency, $1/\Delta$.}} 
  \label{table:CKLS}
\end{table}
\FloatBarrier
\end{center}

\subsubsection{Hyperbolic Process}
As a final simulated example, we consider the two parameter hyperbolic process  driven by
$$
dS_t = -\frac{\kappa S_t}{\sqrt{1 + S^2_t}}dt + \sigma dW_t,
$$
 with  $\kappa, \sigma > 0$, which is a special case of the general hyperbolic diffusion of \cite{barndorff1978hyperbolic},
$$
dS_t = \frac{\sigma^2}{2}\left(\beta - \gamma \frac{ S_t}{\sqrt{\delta^2 + (S_t - \mu)^2}} \right)dt + \sigma dW_t.
$$
\begin{center}
\begin{table}[h!t!b!]
\centering
  \addtolength{\tabcolsep}{1pt}
\scalebox{.9}{
\begin{tabular}{llccccccc}
    \hline
\toprule
 & &  \multicolumn{1}{c}{}&\multicolumn{2}{c}{CTMC-MLE} & \multicolumn{2}{c}{Kessler} & \multicolumn{2}{c}{Shoji-Ozaki}\\
   \cmidrule(r{1em}){4-5}\cmidrule(r{1em}){6-7}\cmidrule{8-9}
 $N$ & $1/\Delta$ &  True Param. & $\hat \theta_{N,m} - \theta $ & $\text{sd}(\hat \theta_{N,m})$& $\hat \theta_{N} - \theta $ & $\text{sd}(\hat \theta_{N})$ & $\hat \theta_{N} - \theta $ & $\text{sd}(\hat \theta_{N})$ \\
    \toprule
 120 & 24 & $\kappa=4.000$  &  -0.241 & 1.155  &  0.443 & 1.397   &  0.546 & 1.425  \\ 
  & & $\sigma=0.300$   &  -0.007 & 0.021  &  -0.023 & 0.018   &  0.006 & 0.022  \\ \hline
 260 & 52 & $\kappa=4.000$  &  0.218 & 1.288  &  0.327 & 1.379   &  0.342 & 1.365  \\ 
  & & $\sigma=0.300$   &  -0.001 & 0.014  &  -0.010 & 0.013   &  0.002 & 0.014  \\ \hline
  1250 & 250 & $\kappa=4.000$  &  0.422 & 1.335  &  0.430 & 1.329   &  0.430 & 1.320  \\ 
  & & $\sigma=0.300$   &  -0.000 & 0.006  &  -0.002 & 0.006   &  0.000 & 0.006  \\ 
    \bottomrule
\end{tabular}}
\caption{\small{Hyperbolic Process - comparison of CTMC-MLE vs Kessler and Shoji-Ozaki, with $m=300$ states. Results from 500 repeated simulations, with the same randomized initial guess. Initial $S_0=0.2$. Fixed time horizon $T=5$ with varying sampling frequency, $1/\Delta$.}} 
  \label{table:Hyperbolic}
\end{table}
\FloatBarrier
\end{center}
The estimates are summarized in Table \ref{table:Hyperbolic} for the three methods. We can see a clear advantage in terms of estimation error for the CTMC-MLE method (in terms of error and standard deviation), especially for the less frequent sampling. With daily sampling, all three methods perform similarly well.

\subsection{Real Data Example: Constant Maturity Interest Rates}
\begin{figure}[h!t!b!]
\centering     
\includegraphics[width=.95\textwidth]{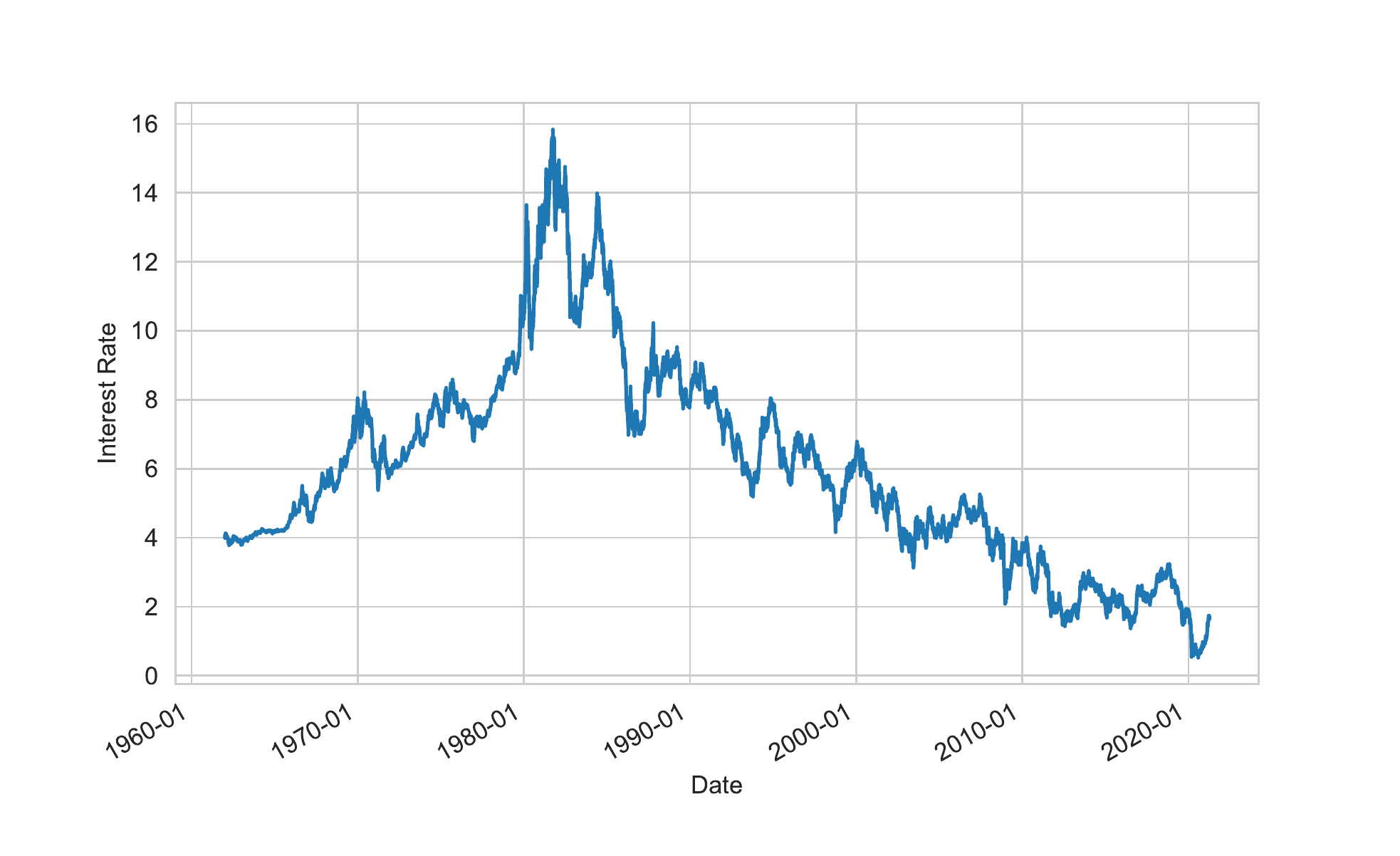}
\caption{\small{10-Year Treasury Constant Maturity Rates from 1962 to 2021.
}}\label{fig:OUPath}
\end{figure}
In this example, we fit the CKLS model of Section \ref{sect:CKLS} to a sample of historical interest rates over the period Jan 1, 1962 to April 8, 2021. The data consists of 14,801 daily observations of the 10-Year Constant Maturity rate.\footnote{Board of Governors of the Federal Reserve System (US), 10-Year Treasury Constant Maturity Rate [DGS10],
retrieved from FRED, Federal Reserve Bank of St. Louis; https://fred.stlouisfed.org/series/DGS10, April 11, 2021.}  Figure \ref{fig:OUPath} displays the historical daily time series. We fit the CKLS model family using the CMTC-MLE approach, along with three time-discretization benchmarks: Kessler, Shoji-Ozaki, and Euler.  Fits are obtained using the full daily sample, with a sampling frequency of $1/\Delta = 252$ business days per year ($N=14801$), as well as weekly ($N=2961$) and yearly sampling ($N=58$). The parameter estimates are displayed in Table \ref{table:CKLSReal} for each method.

\begin{center}
\begin{table}[h!t!b!]
\centering
  \addtolength{\tabcolsep}{1pt}
\scalebox{.95}{
\begin{tabular}{lcccccc}
    \hline
\toprule
 $N$ & $1/\Delta$  &  Param. & CTMC & Kessler &Shoji-Ozaki & Euler \\
    \toprule
     58 & 1 & $\theta_1$  &0.161 & 0.002  &  0.226 & 0.147     \\ 
  &(yearly) & $\theta_2$   &  -0.023 & -0.016  &  -0.049 & -0.033     \\ 
  & & $\theta_3$   & 0.576 & 0.610  &  0.480 & 0.467     \\ 
  & & $\theta_4$   & 0.378 &  0.311  &  0.487 & 0.487     \\ \hline
 2961 & 52  & $\theta_1$  & 0.238 & 0.082  &  0.281 & 0.273     \\ 
  & (weekly)& $\theta_2$   &  -0.046 & -0.019  &  -0.053 & -0.052     \\ 
  & & $\theta_3$   & 0.491 & 0.497  &  0.498 & 0.498     \\ 
  & & $\theta_4$   & 0.431 & 0.426  &  0.427 & 0.427     \\ \hline
   14801 & 252 & $\theta_1$  & 0.191 & 0.059  &  0.271 & 0.267     \\ 
  & (daily) & $\theta_2$   &  -0.038 & -0.016  &  -0.051 & -0.051     \\ 
  & & $\theta_3$   & 0.559 & 0.559  &  0.559 & 0.558     \\ 
  & & $\theta_4$   &  0.325 & 0.338  &  0.338 & 0.338     \\
    \bottomrule
\end{tabular}}
\caption{\small{CKLS model fit to 10-Year Constant Maturity interest rates.}} 
  \label{table:CKLSReal}
\end{table}
\FloatBarrier
\end{center}

While the ``true'' parameters are unknown (assuming that the rates data-generating process belongs to the CKLS parametric family), we can get an idea of the bias of the estimator based on how much its estimate changes as we reduce the sample size that it sees. Comparing the case of $N=14801$ data points to $N=58$, what stands out is how stable the CTMC estimate is, which reflects the fact that the CTMC approximation \emph{has no time discretization error}, while each of the other three methods are susceptible to this source of error. Moreover, we can see that the Euler and Shoji-Ozaki methods are very similar for all sample sizes, and it is well known that the Euler method suffers from a large time discretization bias.  The CTMC method is a viable alternative to these types of approximations which is especially attractive for situations in which time-discretization bias of the sample is concerning, such as with weekly, monthly, quarterly, or yearly sampled time series (all of which are typical for econometric data).

\subsection{Real Data Example: USD/Euro Exchange Rates}

\begin{figure}[h!t!b!]
\centering     
\includegraphics[width=.95\textwidth]{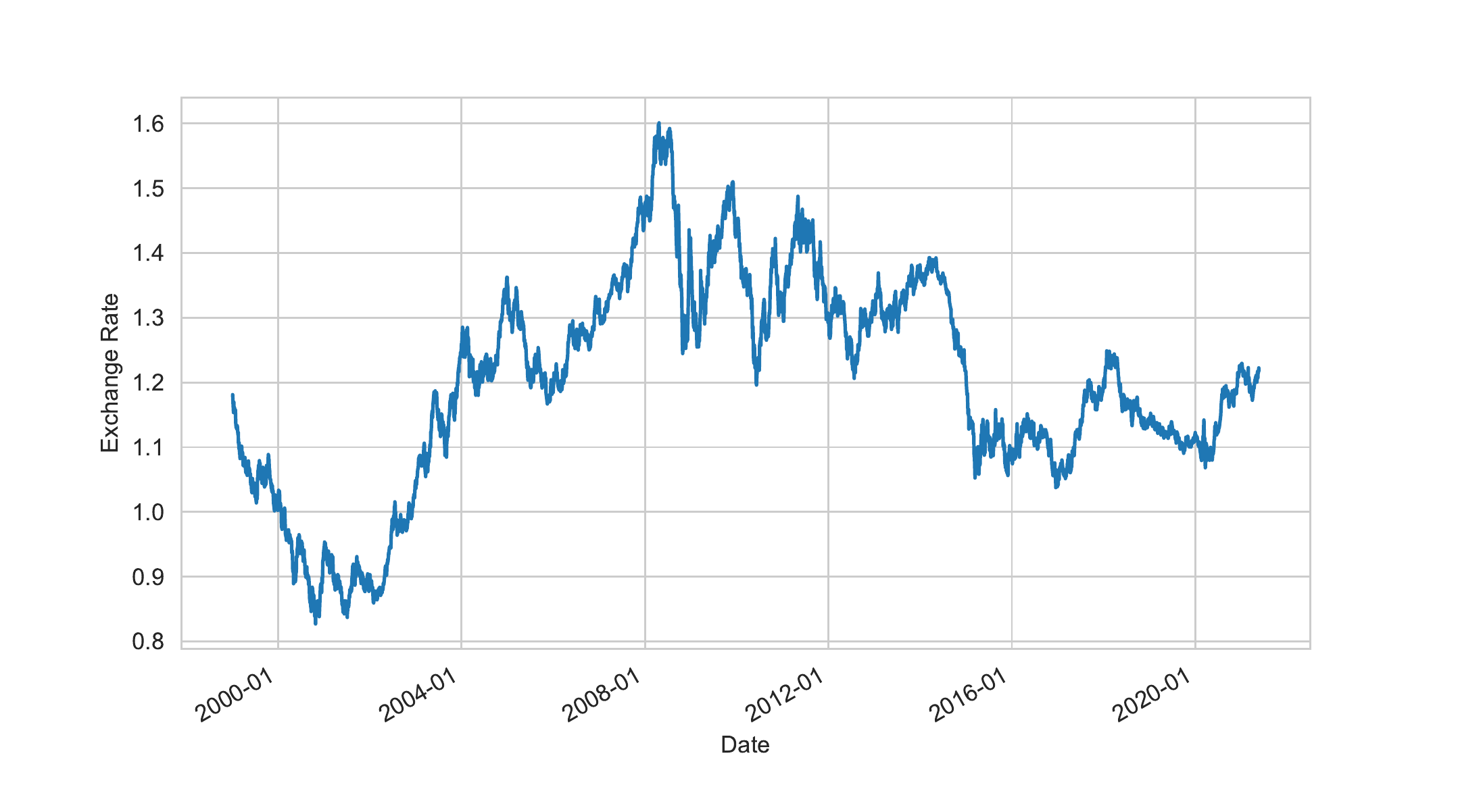}
\caption{\small{FX rates - USD/EUR from 1999 to 2021.
}}\label{fig:FXUSD}
\end{figure}

In this final example, we fit a time series of USD/EUR exchange rates over the period Jan 1, 1999 to May 21, 2021.\footnote{Board of Governors of the Federal Reserve System (US), U.S. / Euro Foreign Exchange Rate [DEXUSEU], retrieved from FRED, Federal Reserve Bank of St. Louis; https://fred.stlouisfed.org/series/DEXUSEU, May 24, 2021.}
The time series is displayed in Figure \ref{fig:FXUSD}, where it exhibits a clear mean-reverting pattern. The fitted parameters are provided in Table \ref{table:CKLSRealFX} for each method.  Taking the case of $1/\Delta = 252$ (which utilizes the full sample) as the consensus of parameter estimates, we can see that all methods are fairly consistent with each other, with the possible exception of Kessler's method.  Overall, these experiments demonstrate that the CTMC-MLE method is a reliable estimation approach for univariate diffusion models.

\begin{center}
\begin{table}[h!t!b!]
\centering
  \addtolength{\tabcolsep}{1pt}
\scalebox{.95}{
\begin{tabular}{lcccccc}
    \hline
\toprule
 $N$ & $1/\Delta$  &  Param. & CTMC & Kessler &Shoji-Ozaki & Euler \\
    \toprule
   281   & 12 & $\theta_1$  &0.239 & 0.218 & 0.243 & 0.243 \\   
  &(monthly) & $\theta_2$   &  -0.200 & -0.179 & -0.200 & -0.200 \\ 
  & & $\theta_3$   & 0.097 & 0.097 & 0.099 & 0.098 \\ 
  & & $\theta_4$   & 0.933 & 0.869 & 0.930 & 0.903 \\ \hline
1124 & 52  & $\theta_1$  & 0.240 & 0.229 & 0.241 & 0.243 \\
  & (weekly)& $\theta_2$   & -0.200 & -0.187 & -0.199 & -0.200 \\
  & & $\theta_3$   & 0.099 & 0.096 & 0.099 & 0.099 \\
  & & $\theta_4$   & 0.784 & 0.782 & 0.796 & 0.796 \\\hline
 5617  & 252 & $\theta_1$  & 0.240 & 0.216 & 0.243 & 0.243 \\
  & (daily) & $\theta_2$   & -0.200 & -0.177 & -0.200 & -0.200 \\
  & & $\theta_3$   & 0.095 & 0.095 & 0.095 & 0.095 \\
  & & $\theta_4$   &  0.960 & 0.982 & 0.986 & 0.986 \\
    \bottomrule
\end{tabular}}
\caption{\small{CKLS model fit to USD/EUR exchange rates from 1999 to 2021.}} 
  \label{table:CKLSRealFX}
\end{table}
\FloatBarrier
\end{center}

%


\section{Conclusion}
\label{section:Conclusion}
We propose a novel continuous-time Markov chain
approach to estimate
the unknown parameters
of general one-dimensional diffusions.
By utilizing a spatial discretization approach, the method avoids time-discretization error, and is thus safely applicable for time-series with all sampling frequencies.  The CTMC structure enables us to obtain likelihood approximations in closed-form, thus facilitating maximum likelihood estimation.  Comparisons with existing estimators (Exact MLE, Euler, Kessler, and Shoji-Ozaki) demonstrate the favorable performance of this new parameter estimation framework.
It will be interesting to extend the approach
proposed in this paper to higher dimensional diffusions. We leave
this as an interesting research problem for future studies.
\bibliographystyle{amsalpha}
\bibliography{LV}

\appendix

\end{document}